\definecolor{newblue}{rgb}{0.2,0.2,0.6} 
\newcommand{\fullexponent}{\bigo{\frac{d}{\epsilon^2} \log(d) \max\left(\log\left(\frac{1}{c\sqrt{\alpha}}\right), \log\left(\frac{1}{(1/2) - c \sqrt{\alpha}}\right)\right)}}
\newcommand{\exponentsmallc}{\bigo{\frac{d}{\epsilon^2} \log(d) \log\left(\frac{1}{c\sqrt{\alpha}}\right)}}
\newcommand{\algname}{\textsc{Randomised-KS}}
\tikzstyle{basic}=[fill=white, draw=black, shape=circle]
\tikzstyle{square}=[fill=white, draw=black, shape=rectangle]
\tikzstyle{big dashed}=[fill=white, draw=black, shape=circle, minimum width=1cm, dashed]
\tikzstyle{vertical ellipse dashed}=[fill=none, draw=blue, minimum width=0.75cm, minimum height=3cm, ellipse, dashed, tikzit shape=rectangle, tikzit draw=blue, tikzit fill=white]
\tikzstyle{small vertical ellipse dashed}=[fill=none, draw=blue, shape=circle, tikzit fill=white, tikzit draw=blue, dashed, minimum width=0.75cm, minimum height=1.5cm, tikzit shape=rectangle, ellipse]
\tikzstyle{tiny vertical ellipse dashed}=[fill=none, draw=blue, shape=circle, tikzit fill=white, ellipse, dashed, minimum width=0.75cm, minimum height=1cm, tikzit shape=rectangle]
\tikzstyle{green}=[fill={rgb,255: red,0; green,128; blue,128}, draw=black, shape=circle]
\tikzstyle{red}=[fill=red, draw=black, shape=circle]
\tikzstyle{blue}=[fill=blue, draw=black, shape=circle]
\tikzstyle{orange}=[fill=orange, draw=black, shape=circle]
\tikzstyle{green}=[fill=green, draw=black, shape=circle]
\tikzstyle{brown}=[fill=brown, draw=black, shape=circle]
\tikzstyle{green}=[fill=green, draw=black, shape=circle]
\tikzstyle{magenta}=[fill=magenta, draw=black, shape=circle]
\tikzstyle{cyan}=[fill=cyan, draw=black, shape=circle]
\tikzstyle{yellow}=[fill=yellow, draw=black, shape=circle]
\tikzstyle{pink}=[fill=pink, draw=black, shape=circle]
\tikzstyle{teal}=[fill=teal, draw=black, shape=circle]
\tikzstyle{huge dashed}=[fill=white, draw=black, shape=circle, dashed, minimum width=2cm]
\tikzstyle{medium}=[fill=white, draw=black, shape=circle, minimum width=1cm]
\tikzstyle{pale green}=[fill={rgb,255: red,173; green,231; blue,0}, draw=black, shape=circle, minimum width=1cm]
\tikzstyle{horizontal ellipse dashed}=[fill=white, draw=black, tikzit draw=magenta, tikzit shape=rectangle, minimum width=3cm, minimum height=0.75cm, ellipse, dashed]
\tikzstyle{minsize}=[fill=white, draw=black, shape=circle, minimum width=0.75cm]
\tikzstyle{horizontal ellipse green}=[fill={rgb,255: red,191; green,255; blue,0}, draw=black, tikzit draw={rgb,255: red,191; green,255; blue,0}, tikzit shape=rectangle, minimum width=3cm, minimum height=0.75cm, ellipse, dashed]
\tikzstyle{horizontal ellipse blue}=[fill={rgb,255: red,107; green,203; blue,255}, draw=black, tikzit draw=blue, tikzit shape=rectangle, minimum width=3cm, minimum height=0.75cm, ellipse, dashed]
\tikzstyle{smallblack}=[fill=black, draw=black, shape=circle, inner sep=0 pt, minimum size=3 pt]
\tikzstyle{smallSquare}=[fill=white, draw=black, shape=rectangle, inner sep=0 pt, minimum size=6 pt]
\tikzstyle{smallCircle}=[fill=white, draw=black, shape=circle, inner sep=0 pt, minimum size=6 pt]
\tikzstyle{big vertical ellipse dashed}=[fill=none, draw=blue, shape=circle, tikzit shape=rectangle, ellipse, dashed, minimum width=0.95cm, minimum height=3.7cm]
\tikzstyle{smallred}=[fill=red, draw=red, shape=circle, inner sep=0 pt, minimum size=3 pt]
\tikzstyle{directed}=[->]
\tikzstyle{undirected}=[-, line width=1pt]
\tikzstyle{directed red}=[draw={rgb,255: red,255; green,55; blue,55}, ->, line width=2pt]
\tikzstyle{directed green}=[draw={rgb,255: red,0; green,128; blue,128}, ->, line width=1pt]
\tikzstyle{directed blue}=[draw={rgb,255: red,55; green,55; blue,255}, ->, line width=2pt]
\tikzstyle{directed purple}=[draw={rgb,255: red,128; green,0; blue,128}, ->, line width=1pt]
\tikzstyle{undirected red}=[-, draw=red, line width=1pt]
\tikzstyle{undirected green}=[-, draw={rgb,255: red,0; green,107; blue,61}, line width=1pt]
\tikzstyle{undirected blue}=[-, draw=blue, line width=1pt]
\tikzstyle{filled set cell}=[-, fill={rgb,255: red,175; green,255; blue,197}, tikzit fill={rgb,255: red,149; green,255; blue,179}]
\tikzstyle{undirected dashed}=[-, line width=1pt, dashed]
\tikzstyle{orange dashed}=[-, draw={rgb,255: red,255; green,128; blue,0}, dashed, line width=1.5pt]
\tikzstyle{directed dash}=[->, dashed]
\tikzstyle{blue dashed}=[-, draw=blue, dashed, line width=1pt]
\tikzstyle{green dashed}=[-, draw={rgb,255: red,0; green,162; blue,0}, dashed, line width=1pt]
\tikzstyle{blue filled}=[-, fill={blue!20}, draw=blue, line width=1pt, opacity=0.5, tikzit fill=white]
\tikzstyle{red filled}=[-, fill={red!20}, line width=1pt, draw=red, opacity=0.5, tikzit fill=white]
\tikzstyle{green filled}=[-, line width=1pt, draw={rgb,255: red,0; green,107; blue,61}, opacity=0.5, tikzit fill={rgb,255: red,149; green,255; blue,179}, fill={rgb,255: red,149; green,255; blue,179}]
\tikzstyle{orange filled}=[-, fill={orange!20}, draw=orange, line width=1pt, opacity=0.5, tikzit fill=white]
\tikzstyle{undirected dashed}=[-, draw=black, dashed, line width=1pt]
\tikzstyle{custom dotted}=[-, draw=black, dashed, line width=1pt, dash phase=3pt, shorten >=3pt]
\newtheorem{theorem}{Theorem}
\newtheorem{conjecture}{Conjecture}
\newtheorem{lemma}{Lemma}
\newtheorem{remark}{Remark}
\newtheorem{problem}{Problem}
\newcommand{\rot}{\intercal}                    % Transpose
\newcommand{\transpose}{\intercal}                    % Transpose
\newcommand{\inner}[2]{\left\langle #1 , #2 \right\rangle} % <a,b>
\newcommand{\norm}[1]{\left\| #1\right\|}                  % ||a||
\newcommand{\calL}{\mathcal{L}}
\newcommand{\calI}{\mathcal{I}}
\newcommand{\R}{\mathbb{R}}
\newcommand{\union}{\cup}
\newcommand{\intersect}{\cap}
\newcommand{\abs}[1]{\left\lvert#1\right\rvert}
\newcommand{\cardinality}[1]{\abs{#1}}
\newcommand{\reductionsatold}{\textsf{NAE-3SAT}}
\newcommand{\NP}{\mathsf{NP}}
\newcommand{\FNP}{\mathsf{FNP}}
\newcommand{\KSc}{\mathsf{KS}_2(c)}
\newcommand{\KScNP}{\mathsf{KS}_2\left(1/\left(4\sqrt{2}\right)\right)}
\newcommand{\partition}{\mathsf{PARTITION}}
\newcommand{\reductionsatnew}{\textsf{NAE-3SAT-KS}}
\newcommand{\sgn}{\text{sgn}}
\newcommand{\bigo}[1]{O\!\left(#1\right)}
\newcommand{\polylog}[1]{\mathrm{polylog}\!\left(#1\right)}
\definecolor{indiagreen}{rgb}{0.07, 0.53, 0.03}
\newcommand*{\horzbar}{\rule[.5ex]{2.5ex}{0.5pt}}
\newcommand{\twopartdefow}[3]
{
	\left\{
		\begin{array}{ll}
			#1 & \mbox{if } #2 \\
			#3 & \mbox{otherwise}
		\end{array}
	\right.
}
 \title{Is the Algorithmic Kadison-Singer Problem Hard?\footnote{This work is supported by an EPSRC Doctoral Training Studentship~(2590711),   and  an EPSRC  Fellowship~(EP/T00729X/1).}}
\author{anonymous}
\date{}
\author{Ben Jourdan\\ University of Edinburgh  \and Peter Macgregor \\ University of Edinburgh   \and He Sun \\ University of Edinburgh }
\begin{document}

\clearpage
 
\maketitle

\thispagestyle{empty}

\setcounter{page}{0}

\begin{abstract}

We study the following   $\KSc$ problem:
    let $c\in\mathbb{R}^+$ be some constant, and  $v_1,\ldots, v_m\in\mathbb{R}^d$  be   vectors 
such that $\|v_i\|^2\leq \alpha$ for any $i\in[m]$ and 
$\sum_{i=1}^m \langle v_i, x\rangle^2 =1$
for any $x\in\mathbb{R}^d$ with $\|x\|=1$.
The  $\mathsf{KS}_2(c)$ problem asks to    
  find  some   $S\subset [m]$,  such that it holds for   all $x \in \R^d$ with $\norm{x} = 1$  that
  \[
        \abs{\sum_{i \in S} \langle v_i, x\rangle^2 - \frac{1}{2}} \leq c\cdot\sqrt{\alpha},
        \]
or report no if such $S$ doesn't exist. Based on the   work of Marcus et al.~\cite{MSS} and Weaver~\cite{dm/Weaver04},   the $\KSc$ problem can be seen as the algorithmic   Kadison-Singer problem with   parameter $c\in\mathbb{R}^+$.

Our first  result  is 
  a  randomised algorithm with one-sided error for the    $\KSc$ problem such that (1) our algorithm finds a valid set $S\subset [m]$ with probability at least $1-2/d$, if such $S$ exists, or (2) reports no with probability $1$, if no valid sets exist. The algorithm has running time \[
\bigo{\binom{m}{n}\cdot \mathrm{poly}(m, d)}~\mbox{ for }~n = \exponentsmallc,
\]
 where $\epsilon$ is a parameter which controls the error of the algorithm. This presents the first algorithm for the Kadison-Singer problem whose running time is quasi-polynomial in $m$ in a certain regime, although having exponential dependency on $d$. 
Moreover, it shows that the algorithmic Kadison-Singer problem is easier to solve in low dimensions. 

Our second result is on  the computational complexity of the $\KSc$ problem. We show that 
 the
 $\KScNP$ problem is $\mathsf{FNP}$-hard for general values of $d$, and solving the $\KScNP$ problem is as hard as solving the  \reductionsatold\ problem.  
 \end{abstract}

\newpage

\section{Introduction\label{sec:intro}}
The Kadison-Singer problem~\cite{KS59}   posed in 1959 asks whether every pure state on the (abelian) von Neumann algebra $\mathbb{D}$ of bounded diagonal operators on $\ell_2$ has a unique extension to a pure state on $B(\ell_2)$, the von Neumann algebra of all bounded linear operators on the Hilbert space $\ell_2$. The statement of the Kadison-Singer problem 
arises from work on the foundations of quantum mechanics done by   Dirac in 1940s, and has been  subsequently shown to   be equivalent to numerous important problems in pure mathematics, applied mathematics, engineering and computer science~\cite{KS-detailed}.  
Weaver~\cite{dm/Weaver04} shows that
the Kadison-Singer problem  
 is equivalent to the following discrepancy question, which is    originally  posed as a conjecture.
 
\begin{conjecture}[The $\mathsf{KS}_2$ Conjecture]
There exist universal constants $\eta\geq 2$ and $\theta>0$ such that the following holds. Let $v_1,\ldots, v_m\in\mathbb{C}^d$ satisfy $\|v_i\|\leq 1$ for all $i\in[m]$, and suppose
$
\sum_{i=1}^m | \langle u, v_i \rangle |^2 = \eta
$
for every unit vector $u\in\mathbb{C}^d$. Then, there exists a partition $S_1, S_2$ of $[m]$ so that 
\[
\sum_{i\in S_{j}} |\langle u, v_i \rangle|^2 \leq \eta -\theta,\]
for every unit vector $u\in\mathbb{C}^d$ and every $j=\{1,2\}$.
\end{conjecture}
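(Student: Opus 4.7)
The plan is to prove the $\mathsf{KS}_2$ conjecture via the method of interlacing families of Marcus, Spielman and Srivastava, reducing the discrepancy question to a largest-root bound on the expected characteristic polynomial of a suitable random rank-one matrix model.

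First, I would reduce the claim to a one-sided operator norm statement. Set $w_i = \sqrt{2}\, v_i$ and let $\xi_1, \ldots, \xi_m \in \mathbb{C}^{2d}$ be independent random vectors, where $\xi_i$ takes the value $(w_i, 0)^\transpose$ or $(0, w_i)^\transpose$ each with probability $1/2$; whichever block is non-zero encodes whether $i \in S_1$ or $i \in S_2$. A direct computation gives $\sum_i \E[\xi_i \xi_i^*] = \eta I_{2d}$, $\|\xi_i\|^2 \leq 2$, and $\|\sum_i \xi_i \xi_i^*\| = 2 \max_{j \in \{1,2\}} \|\sum_{i \in S_j} v_i v_i^*\|$. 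Thus the conjecture reduces to showing that some realisation of the $\xi_i$'s yields $\|\sum_i \xi_i \xi_i^*\| \leq 2(\eta - \theta)$.

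Second, I would invoke the interlacing families framework. For each realisation $\xi$, set $p_\xi(x) = \det(xI - \sum_i \xi_i \xi_i^*)$. The key structural claim is that, as the random choices are exposed sequentially, the resulting rooted tree of polynomials forms an interlacing family; this guarantees a realisation whose largest root is at most the largest root of the expected polynomial $\E_\xi[p_\xi(x)]$. Verifying the interlacing property rests on the real-rootedness of mixed characteristic polynomials of rank-one matrices, together with the fact that a convex combination of pairwise interlacing real-rooted polynomials is itself real-rooted.

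Third, I would identify $\E_\xi[p_\xi(x)]$ with a mixed characteristic polynomial and bound its largest root via the multivariate barrier method. Starting from $\det\bigl(xI - \sum_i z_i \, \xi_i \xi_i^*\bigr)$ evaluated at a point $(x; z_1, \ldots, z_m)$ chosen slightly above the natural barrier in each $z_i$-direction, one applies the operators $(1 - \partial_{z_i})$ one at a time, each corresponding to marginalising over a single random choice. The inductive invariant is that after every step the evaluation point remains above the barrier, with the per-step barrier shift in each direction controlled through Cauchy-transform identities. After normalising by $\sqrt{\eta}$ so that the isotropy condition becomes $\sum \E[\tilde\xi_i \tilde\xi_i^*] = I$ and $\|\tilde\xi_i\|^2 \leq 2/\eta$, optimising the starting point yields a largest-root bound of $(1 + \sqrt{2/\eta})^2$ on the normalised polynomial, i.e.\ $\eta + 2\sqrt{2\eta} + 2$ on the original. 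Unwinding the reduction, the conjecture holds with any $\theta \leq (\eta - 2\sqrt{2\eta} - 2)/2$, which is strictly positive once $\eta$ is a sufficiently large universal constant (for instance $\eta = 18$, $\theta = 2$).

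The main obstacle is the barrier argument of the third step: it is genuinely multivariate, and the inductive invariant must be preserved with tight numerical slack to produce a nontrivial $\theta > 0$. The subtle point is that the interplay between the isotropy condition $\sum_i v_i v_i^* = \eta I$ and the pointwise bound $\|v_i\|^2 \leq 1$ is precisely what allows the barrier to shift by the correct amount at each step, and designing a barrier function for which this bookkeeping closes is the crux of the proof.
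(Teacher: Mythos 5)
Your proposal is a correct outline of exactly the approach the paper relies on: the paper does not prove the $\mathsf{KS}_2$ conjecture itself but cites the interlacing-families proof of Marcus, Spielman and Srivastava \cite{MSS}, and your three steps (the block-diagonal $2d$-dimensional randomisation with $\sum_i \E[\xi_i\xi_i^*]=\eta I_{2d}$, the interlacing-family argument reducing to the expected characteristic polynomial, and the multivariate barrier bound $(1+\sqrt{2/\eta})^2$ after normalisation) faithfully reproduce that proof, with the final numerology ($\eta=18$, $\theta=2$ from $2\theta\leq\eta-2\sqrt{2\eta}-2$) checking out. The only caveat is that the third step is a sketch rather than a complete argument, as you yourself acknowledge, but the plan is sound and matches the cited source.
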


As a major breakthrough in mathematics, Marcus, Spielman and Srivastava~\cite{MSS} prove that the $\mathsf{KS}_2$ conjecture holds, and give an affirmative answer to the Kadison-Singer problem. Specifically, in this celebrated paper they show that, for any vectors $v_1,\ldots, v_m\in\mathbb{C}^d$  
  such that $\|v_i\|^2\leq\alpha$ for any $i\in [m]$ and $
\sum_{i=1}^m \langle v_i, x\rangle^2 =1$ 
for any $x\in\mathbb{C}^d$ with $\|x\|=1$, there is a partition $S_1, S_2$ of $[m]$ such that it holds for any $x\in\mathbb{C}^d$ with $\|x\|=1$ and 
$j=1,2$ that 
  \[
  \abs{\sum_{i \in S_j} \langle v_i, x\rangle^2 - \frac{1}{2}} \leq 3\cdot\sqrt{\alpha}.
  \]
The proof of this result is based on studying interlacing families of polynomials~\cite{MSS13}. While analysing interlacing families of polynomials suffices to answer the $\mathsf{KS}_2$ conjecture   and, as a consequence,
solve the Kadison-Singer problem, it is unclear if  their existential proof on the partition guaranteed by the $\mathsf{KS}_2$ conjecture can be turned into an efficient algorithmic construction; designing efficient algorithms for the Kadison-Singer problem is  listed as a natural open question in \cite{MSS}. 
This question is  particularly interesting in theoretical computer science, since it is directly linked  to constructing unweighted spectral sparsifiers~\cite{BSS} and   spectrally thin trees~\cite{AnariG14a}, among many other applications in approximation algorithms. However, there has been little work on the algorithmic Kadison-Singer problem,
and the complexity status of this problem is an important open question.
% and it has been widely open on the complexity status of this problem.

To address this question,  we study the following $\mathsf{KS}_2$ problem with some constant $c\in\mathbb{R}^+$:

\begin{problem}[The \textsf{KS}$_2(c)$ problem]\label{pro:main}
Given vectors $v_1,\ldots, v_m\in\mathbb{R}^d$ such that $\|v_i\|^2\leq \alpha$ for any $i\in[m]$ and 
$
\sum_{i=1}^m \langle v_i, x\rangle^2 =1$
for any $x\in\mathbb{R}^d$ with $\|x\|=1$, the {$\mathsf{KS}_2(c)$} problem asks to     
\begin{itemize}
    \item   find  some   $S\subset [m]$,  such that it holds for   all $x \in \R^d$ with $\norm{x} = 1$  that 
 \begin{equation}\label{eq:KS_condition}
        \abs{\sum_{i \in S} \langle v_i, x\rangle^2 - \frac{1}{2}} \leq c\cdot\sqrt{\alpha},
\end{equation}
    \item or report no if such $S$ doesn't exist.
\end{itemize}
\end{problem}

Notice that the $\mathsf{KS}_2$ conjecture
is equivalent to finding some subset $S\subset[m]$ as stated in 
Problem~\ref{pro:main} for some constant $c$. Here we choose to formulate the discrepancy of any set $S\subset [m]$ in \eqref{eq:KS_condition} as $c\cdot\sqrt{\alpha}$  for three   reasons: first of all,  Weaver~\cite{dm/Weaver04} shows  that 
% the $O(\sqrt{\alpha})$ dependency
the dependency on $O(\sqrt{\alpha})$
in \eqref{eq:KS_condition}
is tight, so the term  $O(\sqrt{\alpha})$ is unavoidable when bounding the discrepancy; secondly,    the $\mathsf{KS}_2$ conjecture shows that the existence of any universal constant $c$ in \eqref{eq:KS_condition} suffices to prove the     Kadison-Singer conjecture, and it is proven in  \cite{MSS} that the $\mathsf{KS}_2$ conjecture holds for $c=3$; however, 
studying the tightness of this constant remains an interesting open question on its own~(Problem~8.1, \cite{KSconsequence}).  Finally, as we  will show shortly,   the $\KSc$ problem belongs to different complexity classes with respect to different values of $c$, so introducing this parameter $c$ allows us to better understand the   complexity of the algorithmic Kadison-Singer problem.

\subsection{Our Results}
 
  Our first result is an algorithm called \algname$(\{v_i\}, c, \epsilon)$ for approximately solving the $\KSc$ problem for general values of $c$.   For any constant $c$,   $\epsilon<1$, and any vectors $v_1,\ldots,v_m\in\mathbb{R}^d$ such that $\|v_i\|^2\leq \alpha$ for all $i\in[m]$, we show that
  \begin{itemize}
      \item if there exists an $S$ which satisfies \eqref{eq:KS_condition}, then with probability at least $(1-2/d)$ the algorithm returns a set $S'\subset \{v_i\}_{i=1}^m$ that satisfies
  \begin{align}
      (1-\epsilon)\Big(\frac 1 2 - c\sqrt \alpha\Big)\le \sum_{v\in S'} \inner{v}{x}^2\le (1+\epsilon)\Big(\frac 1 2 + c\sqrt \alpha\Big) \label{eq:KS_ep_condition}
  \end{align}
  for all unit vectors $x \in \R^d$, and
      \item if no set exists which satisfies \eqref{eq:KS_ep_condition}, then with probability $1$ the algorithm returns `no'.
  \end{itemize}
  Our result is summarised as follows:
  \begin{theorem} \label{thm:algorithm}
%   \red{Let $\epsilon\in(0,1)$ be some constant.}
  There is an algorithm,   \algname$(\mathcal{I}, c, \epsilon)$, such that for any instance  $\mathcal{I} \triangleq \{v_i\}_{i = 1}^m$  of the $\KSc$ problem with $v_i \in \R^d$ for $d\geq 3$,      and for any $\epsilon \in (0, 1)$, the following holds:
\begin{itemize}
\item
 if there exists a set $S \subset \mathcal{I}$ such that
\[
\left(\frac{1}{2} - c \sqrt{\alpha}\right) \leq  \sum_{v \in S} \inner{v}{x}^2  \leq \left(\frac{1}{2} + c \sqrt{\alpha} \right) 
\]
for all unit vectors $x \in \R^d$,
% then with high probability 
then with probability at least $(1 - 2/d)$, 
the \algname$(\mathcal{I}, c, \epsilon)$ algorithm returns a subset $S' \subset \mathcal{I}$ which satisfies~\eqref{eq:KS_ep_condition} for all unit vectors $x \in \R^d$.
\item if there is no set $S \subset \mathcal{I}$ which satisfies \eqref{eq:KS_ep_condition}, %\blue{Capital letter at the start of this sentence?}
% \[
%     \frac{1}{2} - c \sqrt{\alpha} \leq  \sum_{v \in S} \inner{v}{x}^2  \leq \frac{1}{2} + c \sqrt{\alpha},
% \]
then with probability $1$, the \algname$(\mathcal{I}, c, \epsilon)$ algorithm reports that no such set exists.
\end{itemize}
The algorithm has running time 
\[
\bigo{\binom{m}{n}\cdot \mathrm{poly}(m, d)}~\mbox{ for }~n \triangleq \fullexponent.
\]
\end{theorem}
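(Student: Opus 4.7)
Our approach combines exhaustive enumeration over small subsets with a polynomial-time spectral verification subroutine.

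The first ingredient is verification: for any candidate set $S \subset \mathcal{I}$, the condition~\eqref{eq:KS_ep_condition} holds for every unit vector $x \in \R^d$ if and only if the spectrum of the $d \times d$ matrix $M_S \triangleq \sum_{i \in S} v_i v_i^{\transpose}$ is contained in the interval $\left[(1-\epsilon)(1/2 - c\sqrt{\alpha}),\ (1+\epsilon)(1/2 + c\sqrt{\alpha})\right]$. This is decidable in $\poly{m, d}$ time by computing the extreme eigenvalues of $M_S$. Running this test on every candidate the algorithm considers, and only outputting verified candidates, immediately gives the one-sided error guarantee: if no valid set exists, the algorithm correctly returns ``no'' with probability $1$.

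The second ingredient is a randomised completion subroutine parameterised by a ``seed'' set $T \subset \mathcal{I}$ of size $n = \exponentsmallc$. For each seed, we extend $T$ to a candidate $S$ by including each of the remaining vectors independently with an appropriate probability (for example, a $1/2$-Bernoulli rule, with $T$ placed deterministically in $S$). A matrix Chernoff bound then shows that $M_S$ concentrates around $\tfrac{1}{2} I$ in operator norm; the exponent defining $n$ is calibrated so that the operator-norm deviation is at most $\epsilon c \sqrt{\alpha}$ with failure probability at most $2/d$. The factor $d \log d$ is the standard dimension price for matrix concentration, $1/\epsilon^2$ is the concentration rate needed to land inside a window of relative width $\epsilon$, and the $\log(1/(c\sqrt{\alpha}))$ factor amplifies the Chernoff bound to the precision required by the narrow target window of absolute width $c\sqrt{\alpha}$.

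The third ingredient is the enumeration: we run the randomised completion once for each of the $\binom{m}{n}$ seed sets of size $n$, and return the first candidate that passes verification. The key claim driving the existence side of the theorem is that whenever a set $S^\star$ satisfying the tighter bound~\eqref{eq:KS_condition} exists, there is at least one seed $T$ of size $n$ whose random completion yields a set satisfying~\eqref{eq:KS_ep_condition} with probability at least $1 - 2/d$. Combined with the verification step, this yields the claimed running time of $\binom{m}{n} \cdot \poly{m, d}$. The main technical obstacle will be calibrating the seed-and-completion procedure so that (i) at least one seed of size $n$ suffices, and (ii) the dependence on $c\sqrt{\alpha}$ matches the stated exponent. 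A naive $1/2$-Bernoulli completion without a seed produces deviations of order $\sqrt{\alpha \log d}$, overshooting the target window whenever $c\sqrt{\alpha}$ is small; the role of the seed is to deterministically absorb the vectors responsible for the worst-case deviations, so that brute-force enumeration over all seeds of size $n$ is guaranteed to encounter a good configuration when one exists.
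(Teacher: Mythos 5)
Your verification ingredient is fine and matches the paper: checking \eqref{eq:KS_ep_condition} reduces to bounding the extreme eigenvalues of $\sum_{i\in S}v_iv_i^{\transpose}$, and returning only verified candidates gives the one-sided error for free. The gap is in your second and third ingredients, and it is not a calibration detail but the entire content of the theorem. A Bernoulli-$1/2$ completion of the $m-n$ non-seed vectors is an independent random sampling of a system that is still essentially isotropic with the same norm bound $\alpha$ (removing $n$ vectors does not reduce $\alpha$ or the total variance $\bigl\|\sum_{i\notin T}v_iv_i^{\transpose}\bigr\|$), so matrix Chernoff gives operator-norm deviation $\Theta(\sqrt{\alpha\log d})$ at failure probability $1/d$ --- and this is tight for worst-case instances, not an artifact of the bound. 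To land within $\epsilon c\sqrt{\alpha}$ of $\tfrac12 I$ with probability $1-2/d$ you would need $\epsilon^2c^2\gtrsim\log d$, which fails for every constant $c$ and $\epsilon<1$. Your proposed fix --- that a seed of size $n$ ``deterministically absorbs the vectors responsible for the worst-case deviations'' --- is asserted, not proved, and is false in general: the $\sqrt{\log d}$ loss of independent sampling is not caused by a small set of bad vectors, and overcoming it is precisely why Marcus--Spielman--Srivastava needed interlacing families rather than concentration. Your ``key claim'' that some seed's completion succeeds is therefore an unproved restatement of the theorem.

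The paper's route avoids random completion entirely. It enumerates \emph{all} subsets implicitly, level by level, but collapses them into spectral equivalence classes: each class is represented by an online spectral sparsifier built from at most $n$ vectors (Lemma~\ref{lem:cohenonline}, from the online sparsification algorithm of Cohen et al.), so the number of classes alive at any level is at most $\binom{m}{n}$. Randomness is used only inside the sparsifier construction, to certify that (a) each stored matrix $B$ approximates its representative set $S'$ (Lemma~\ref{lem:algpairs}) and (b) the target set $S$ is approximated by \emph{some} stored $B$ (Lemma~\ref{lem:ssparsifier}); chaining these two approximations through the common $B$ shows $S'$ satisfies \eqref{eq:KS_ep_condition}. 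No concentration of a randomly chosen candidate around $\tfrac12 I$ is ever needed --- the candidate $S'$ is found by exhaustive search over equivalence classes, which is what makes the argument go through where yours stalls.
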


  \begin{remark}
 Since the most interesting instances of the 
 $\KSc$ problem are the cases in which $1/2 + c \sqrt{\alpha}$ is bounded away from $1$, we can assume that $c\sqrt{\alpha} \leq 1/2 - \sigma$ for some constant $\sigma$ which implies that
 \[
    n = \exponentsmallc.
 \]
Combining  this with  $d=\sum_{i=1}^m \|v_i\|^2\leq \alpha m$, a constraint due to the isotropic nature of the input, shows that our algorithm runs in quasi-polynomial time in $m$ when $d = \bigo{\polylog{m}}$.
 \end{remark}

Compared with 
the state-of-the-art  that runs in 
  $d^{O(m^{1/3} \alpha^{-1/4})}$ 
 time~\cite{conf/soda/AnariGSS18}, 
the most appealing fact of 
Theorem~\ref{thm:algorithm}
is that it shows the $\KSc$ problem can be approximately solved in quasi-polynomial time when $d=O(\mathrm{poly} \log m)$.  Moreover, for small values of $c$ where  
a subset $S\subset [m]$ satisfying \eqref{eq:KS_condition} isn't guaranteed to exist, our algorithm, with the same time complexity, is still able to find an $S$ satisfying \eqref{eq:KS_ep_condition} with high probability if it exists, or report no with probability $1$ otherwise.  These two facts together show that both determining the existence of a valid subset $S$   and finding such $S$ are computationally much easier  in low dimensions, regardless of the range of $c$.
In addition, our result is much stronger than a random sampling based algorithm, which only works in the regime of $\alpha=O(1/\log d)$~\cite{tropp2012user}, while our algorithm works even when there are vectors with much larger norm, e.g., $\alpha=\Theta(1)$.
On the other side, like   
  many   optimisation problems
  that involve the dimension of input items in their formulation~(e.g.,    multi-dimensional  packing~\cite{siamcomp/ChekuriK04}, and vector scheduling~\cite{algorithmica/BansalOVZ16}), 
    Theorem~\ref{thm:algorithm}  indicates that the order of $d$  might play a significant role in the hardness of the $\KSc$ problem, and the hard instances of the problem might be in the regime of $m=O(d)$.

Inspired by this,  we study the computational complexity of the $\KSc$ problem for general values of $d$,  where the number of input vectors satisfies $m=O(d)$.  
In order to study the `optimal' partitioning, for a given instance of the problem $\mathcal{I} = \{v_1, \ldots, v_m\}$, let
\[
    \mathcal{W}(\mathcal{I}) \triangleq \min_{S \subset \mathcal{I}} \max_{\substack{x \in \R^d\\ \|x\| = 1}} \abs{\sum_{v \in S} \langle v, x \rangle^2 - \frac{1}{2}}.
\]
Then, we choose $c=1/(4\sqrt{2})$ and notice that, for any vectors that satisfy the conditions of the $\KSc$ problem, there could be no subset $S$ satisfying \eqref{eq:KS_condition} for such $c$. As our second result, we prove that, for any  $c \leq 1/(4\sqrt{2})$,
distinguishing between instances for which $\mathcal{W}(\mathcal{I}) = 0$ and those for which $\mathcal{W}(\mathcal{I}) \geq c \cdot \sqrt{\alpha}$ is $\NP$-hard.  Our result is as follows:
\begin{theorem} \label{thm:hardness}
The $\mathsf{KS}_2\left(1/\left(4\sqrt{2}\right)\right)$ problem is   $\mathsf{FNP}$-hard  for general values of $d$.
Moreover, it is $\NP$-hard to distinguish between instances of the $\KSc$ problem with $\mathcal{W}(\mathcal{I}) = 0$ from instances with $\mathcal{W}(\mathcal{I}) \geq \left( 1 / 4 \sqrt{2} \right) \cdot \sqrt{\alpha}$.
\end{theorem}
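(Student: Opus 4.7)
The plan is to prove Theorem~\ref{thm:hardness} via a gap-preserving reduction from \reductionsatold, which is $\NP$-complete. Given an \reductionsatold\ instance $\phi$ on $n$ variables and $k$ clauses, I would construct in polynomial time an instance $\mathcal{I}_\phi = \{v_1,\ldots,v_m\} \subset \R^d$ of $\KScNP$, with $m, d = \poly(n+k)$, satisfying the isotropy and norm conditions of Problem~\ref{pro:main}, such that $\mathcal{W}(\mathcal{I}_\phi)=0$ when $\phi$ is NAE-satisfiable and $\mathcal{W}(\mathcal{I}_\phi)\geq (1/(4\sqrt{2}))\sqrt{\alpha}$ otherwise. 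This yes/no gap is itself the $\NP$-hardness of the promise problem in the second part of the theorem, and it simultaneously forces any algorithm that correctly solves the search version of $\KScNP$ to decide NAE-satisfiability, giving $\FNP$-hardness.

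To keep the combinatorics separate from the spectral embedding, I would factor the reduction through the intermediate problems named in the preamble: \reductionsatold\ reduces to \partOpt, \partOpt\ reduces to the one-dimensional \ksOneDimOpt, and \ksOneDimOpt\ embeds into $\KSc$ by placing each weighted one-dimensional item on a private coordinate axis. The combinatorial step uses two gadgets. A \emph{variable gadget} associates each $x_i$ with a pair of items whose placement on opposite sides of the partition encodes its truth value, so that consistency of the truth assignment is forced by the exact-balance constraint. A \emph{clause gadget} couples the three literals of a clause so that the only perfectly balanced configurations are precisely the NAE-assignments of those literals. Verifying $\sum_i v_i v_i^\top = I$ and $\|v_i\|^2\leq\alpha$ is then a bookkeeping calculation that also fixes the common scale $\alpha$.

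I expect the main obstacle to lie in the quantitative gap for unsatisfiable $\phi$: showing that \emph{every} subset $S\subset\mathcal{I}_\phi$ admits a unit test vector $x\in\R^d$ with $\bigl|\sum_{v\in S}\langle v,x\rangle^2-1/2\bigr|\geq (1/(4\sqrt{2}))\sqrt{\alpha}$. My plan is to restrict $x$ to the constant-dimensional coordinate subspace of a violated clause; within this subspace the deviation becomes the extremal eigenvalue of a fixed small matrix minimised over the finitely many $0/1$ selections on the gadget's vectors. The value of this finite optimisation ultimately pins down the constant $1/(4\sqrt{2})$, so producing a gadget whose worst case is exactly this value (rather than merely bounded away from zero) will drive the precise choice of coefficients and the normalisation used in the construction.
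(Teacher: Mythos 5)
Your high-level frame (a gap reduction from \reductionsatold\ with completeness $\mathcal{W}(\mathcal{I}_\phi)=0$ and soundness $\mathcal{W}(\mathcal{I}_\phi)\geq(1/(4\sqrt{2}))\sqrt{\alpha}$, from which $\FNP$-hardness of the search problem follows) matches the paper. But the route you propose --- factoring through \partOpt\ and a one-dimensional problem \ksOneDimOpt, then embedding ``by placing each weighted one-dimensional item on a private coordinate axis'' --- cannot work, and the obstruction is already pointed out in the remark following the theorem statement: the one-dimensional problem is exactly $\partition$, and its gap version (distinguish $\mathcal{W}=0$ from $\mathcal{W}\geq\epsilon$) is solved by the standard FPTAS in time $\poly{m,1/\epsilon}$, hence is in $\mathsf{P}$ for $\epsilon=c\sqrt{\alpha}\geq c/\sqrt{m}$. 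So no gap-preserving reduction from \reductionsatold\ can terminate at \ksOneDimOpt\ unless $\mathsf{P}=\NP$; your chain breaks at its second arrow. The final embedding step fails for the same structural reason: if every vector is supported on a single coordinate, then $\sum_{v\in S}vv^\transpose$ is diagonal for \emph{every} $S$, the constraint decouples into independent one-dimensional $\partition$ instances on each axis, and the whole gap problem remains polynomial-time solvable. Worse, with private axes and the isotropy constraint $\sum_i v_iv_i^\transpose=I$, there is no coupling at all between the choices made on different axes, so a clause gadget cannot ``couple the three literals'' as you intend. The hardness is an irreducibly multi-dimensional phenomenon, and any working construction must make the \emph{off-diagonal} entries of $\sum_{v\in S}vv^\transpose$ carry the logical constraints: this is exactly what the paper does, assigning to each literal four vectors with sign patterns $(\pm 1/4,\pm 1/4,\pm 1/\sqrt{8})$ spread across its clause dimensions and its variable dimension, so that an inconsistent selection leaves a residual off-diagonal entry of magnitude at least $1/(8\sqrt{2})$, which a $2$-sparse unit test vector converts into the required deviation. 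Your third paragraph gestures at this (``extremal eigenvalue of a fixed small matrix'' on a ``constant-dimensional coordinate subspace''), but that is inconsistent with the private-axis embedding, under which every such small matrix is diagonal.

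A second, smaller gap: you reduce directly from \reductionsatold\ without controlling the structure of the formula. The paper first proves $\NP$-completeness of a restricted variant (\reductionsatnew: each literal occurs at most twice, at least one of $u,\bar{u}$ occurs exactly twice, any two clauses share at most one literal, no variable repeats within a clause). These restrictions are not cosmetic --- they are what make the isotropy identity $\sum_{v\in A}vv^\transpose=I$ and the uniform norm bound $\|v\|^2\leq\alpha=1/4$ verifiable: the clause--clause off-diagonal entries cancel only because two clauses share at most one literal, and the diagonal normalisation on clause dimensions uses the bounded occurrence counts. Any construction you build will need an analogous preprocessing step before the ``bookkeeping calculation'' you defer can go through.
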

\begin{remark}
It is important to note that, when $d$ is constant, the decision problem in Theorem~\ref{thm:hardness} can be solved in polynomial time.
For example, the $1$-dimensional problem is equivalent to the
$\partition$ problem, in which we are given a set of real numbers $\mathcal{I} = \{x_1, \ldots, x_m\}$ such that $\sum_i x_i = 1$ and must determine whether there exists a subset $S \subset \mathcal{I}$ such that $\sum_{x \in S} x = 1/2$.
In this setting, \[
\mathcal{W}(\mathcal{I}) = \min_{S \subset \mathcal{I}} \abs{\left(\sum_{x \in S} x\right) - 1/2}.
\]
There is a well-known FPTAS for $\partition$ which can distinguish between instances for which $\mathcal{W}(\mathcal{I}) = 0$ and those for which $\mathcal{W}(\mathcal{I}) \geq \epsilon$, for any $\epsilon > 0$.
An important consequence of Theorem~\ref{thm:hardness} is that there is no such FPTAS for the optimisation version of the $\KSc$ problem for general $d$.
\end{remark}

 Theorem~\ref{thm:hardness} shows that the
isotropic structure of the $\KSc$ instance is not sufficient to make finding a partition easy when compared with similar problems.
As such, the 
design of a potential polynomial-time algorithm for the Kadison-Singer problem would need to take some range of $c$ into account
% \red{and use more information about the input vectors than their isotropic nature,}
and cannot solve the optimisation version of the $\KSc$ problem,
otherwise one would end up solving an $\mathsf{NP}$-hard problem.   We remark that Theorem~\ref{thm:hardness} shares the same style as the one for  Spencer’s Discrepancy Problem: given any input on $N$ elements, Charikar et al.~\cite{CNN11} shows that it is $\NP$-hard to distinguish between the input with discrepancy zero and the one with discrepancy $\Omega(\sqrt{N})$, although it is known that a solution with  $O(\sqrt{N})$ approximation can be computed efficiently~\cite{Bansal10}.
% for the Discrepancy problem~\cite{Bansal10}.}

%On the other side, since a valid subset of the input vectors for the $\KSc$ problem always exists for $c\geq 3$~\cite{MSS}, the decision version of $\mathsf{KS}_2(3)$ is (trivially) in $\mathsf{P}$. Hence, the complexity of the $\KSc$ problem depends on the value of $c$.

%In our point of view, Theorem~\ref{thm:hardness} raises an interesting open question on whether one can precisely characterise the computational complexity of the $\KSc$ problem  with respect to $c$, the understanding of which will greatly help with its algorithm design.  

\subsection{Our Techniques}

In this subsection we sketch our main techniques used in proving Theorems~\ref{thm:algorithm} and \ref{thm:hardness}.
 \paragraph{Proof Sketch of Theorem~\ref{thm:algorithm}.} 

  We start by sketching the ideas
% Now we  sketch the main ideas
behind our algorithmic result. First of all, it is easy to see that
we can solve the $\KSc$ problem for any $c\in\mathbb{R}^+$ in $O\left(2^m\cdot\mathrm{poly}(m,d)\right)$ time, since we only need to enumerate all the $2^m$ subsets $S\subseteq \calI$ of the input set $\calI$ and check if every possible set $S$ satisfies the condition \eqref{eq:KS_condition}.  To express all the subsets of $\calI$, we 
inductively construct level sets $\left\{\calL_i\right\}_{i=0}^m$ with $\calL_i\subseteq 2^{\calI}$ as follows:

 \begin{itemize}
 \item initially, level $i=0$ consists of a single set $\emptyset$, and we set $\calL_0=\{\emptyset\}$;
 \item based on $\calL_{i-1}$ for any $1\leq i\leq m$, we define $\calL_{i}$ by
$
 \calL_{i} \triangleq  \left\{ S, S\cup\{v_{i}\} : S\in \calL_{i-1} \right\}$.
\end{itemize}
It is   important to see that, although $|\calL_i|$ could be as high as $2^m$, there are only $m$ such
level sets
$\calL_i$, which
are constructed inductively in an \emph{online} manner, and  it holds for any $S\subseteq \calI$  that $S\in \mathcal{L}_m$.

The bottleneck for
improving the efficiency of this simple enumeration algorithm 
is the number of sets in $\calL_m$, which could be exponential in $m$. To overcome this bottleneck,  we introduce the notion of \emph{spectral equivalence classes} to reduce $|\calL_i|$ for any $i\in[m]$. Informally speaking,   if there are different $S_1, S_2\in\calL_i$ for any $i\in[m]$ such that\footnote{For any two matrices $A$ and $B$ of the same dimension, we write $A\preceq B$ if   $B-A$ is positive semi-definite.}
\[
    (1 - \epsilon) \sum_{j \in S_2} v_j v_j^\transpose \preceq \sum_{j \in S_1} v_j v_j^\transpose \preceq (1 + \epsilon) \sum_{j \in S_2} v_j v_j^\transpose
\]
  for some small  $\epsilon$, then  we view  $S_1$ and $S_2$ to be  ``spectrally equivalent'' to each other\footnote{Although this relationship is not symmetric, this informal definition is sufficient for the proof sketch and is not used directly in our analysis.}. It suffices to use one   set to represent all of its spectral equivalences; hence, we only need to store the subsets which aren't spectrally equivalent to each other\footnote{The list of stored subsets can be thought of as an epsilon cover of all possible subsets.}.
  Since there is a spectral sparsifier of any $S$ with   $O(d\log (d)/\epsilon^2)$ vectors~\cite{cohen2016online,SpielmanS11}, we can   reduce the total number of stored subsets~(i.e., the
  number of    spectral equivalence classes) in $\calL_i$  for any $i\in[m]$   to $\binom{m}{n}$ where $n = \bigo{d \log(d) / \epsilon^2}$ which is no longer exponential in $m$. 

Turning this idea into an algorithm design, we need be careful that the small approximation error introduced by every constructed spectral sparsifier  does not compound as we construct sparsifiers from one level to another. 
In order to avoid this, we employ the online vector sparsification algorithm presented in  \cite{cohen2016online}. This allows us to construct   sparsifiers  in $\calL_{i}$ from the ones in $\calL_{i-1}$ and the vector $v_{i}$. In addition,  the construction in each level preserves the same approximation error as the previous one.

We highlight that the design of  our   algorithm for solving the $\KSc$ problem  is entirely different from the previous work, which is  based on analysing the properties of interlacing polynomials~\cite{conf/soda/AnariGSS18}.
Moreover, one can view our use of online spectral sparsifiers in constructing spectral equivalence classes as an \emph{encoding} strategy to reduce the enumeration space of the $\KSc$ problem. From this aspect, our work sheds light on potential  applications of other tools well-studied in algorithmic spectral graph theory and numerical linear algebra, such as sparsification and sketching.

\paragraph{Proof Sketch of Theorem~\ref{thm:hardness}.} 
Our proof of the $\mathsf{FNP}$-hardness of   the $\mathsf{KS}_2\left(1/\left(4\sqrt{2}\right)\right)$ problem is  based on a reduction from the well-known \reductionsatold\ problem~\cite{garey_computers_1979} to
a
decision version of the $\mathsf{KS}_2\left(1/\left(4\sqrt{2}\right)\right)$ problem,
% which simply asks whether there is  a valid subset satisfying \eqref{eq:KS_condition}   or not.
which asks whether $\mathcal{W}(\mathcal{I}) = 0$ or $\mathcal{W}(\mathcal{I}) \geq \left(1 / \left(4 \sqrt{2}\right)\right) \sqrt{\alpha}$.
Our overall reduction consists of two steps: we first build a reduction from the \reductionsatold\ problem to the so-called \reductionsatnew\ problem, and then build a reduction from the \reductionsatnew\ problem to   the $\mathsf{KS}_2\left(1/\left(4\sqrt{2}\right)\right)$ problem.
 
% \peter{To revise this paragraph connection, given the changed name of \reductionsatnew.}
 
To sketch the first reduction, we examine the so-called \reductionsatnew\ problem, which can be viewed as a restricted version of the \reductionsatold\ problem, and used only as a tool to build the reduction from the \reductionsatold\ problem to the $\KScNP$ problem. 
Informally, the \reductionsatnew\ problem consists of the \textsf{3SAT} Boolean formulae $\psi$, in which   the number of occurrences of both $u$ and $\bar{u}$ for every variable $u$ in any $\psi$ is limited with respect to some additional constraints and any two clauses of $\psi$ share  at most one literal; the \reductionsatnew\ problem asks if there is a satisfying assignment for $\psi$ such that every clause of $\psi$ has at least one true literal and at least one false literal; we refer the reader to  Problem~\ref{prob:satnew} in Section~\ref{sec:hardness} for the formal definition of the \reductionsatnew\ problem. Based on a reduction from the \reductionsatold\ problem, we show that the \reductionsatnew\ problem is $\NP$-complete.

For the second and main reduction of our analysis, we  build a   reduction from the \reductionsatnew\ problem to the $\mathsf{KS}_2\left(1/\left(4\sqrt{2}\right)\right)$ problem. Specifically, for an \reductionsatnew\ instance $\psi$ of $n$ variables and $m$ clauses, we construct a set $A$ of $\Theta(n+m)$ vectors as 
a $\mathsf{KS}_2\left(1/\left(4\sqrt{2}\right)\right)$ instance, and each   
$v\in A$ has dimension $n+m$, such that the following properties hold:
\begin{itemize}
    \item every vector $v$ has norm $\|v\|^2 \leq 1/4$ and \[\sum_{v\in A} v v^\transpose =I;\]
    \item if $\psi$ is a satisfiable instance of \reductionsatnew, then there is a subset $S\subset A$  such that \[\sum_{v\in S} v v^\transpose = (1/2)\cdot I;\]
    \item if $\psi$ is not a satisfiable instance of \reductionsatnew, then for any subset $S\subset A$ there is always some $y\in\mathbb{R}^n$ with $\|y\|=1$ such that 
    \[
    \left| 
    \sum_{v\in S} \inner{v}{y}^2 - \frac{1}{2}
    \right| \geq \frac{1}{8 \sqrt{2}}.
    \]
    \end{itemize}
The key to proving these properties is a careful construction of a $\mathsf{KS}$ instance $\mathcal{I}$ from any formula $\psi$, and an analysis of the properties of $\sum_{v\in S}v v^\transpose $ for any set $S \subseteq \mathcal{I}$ if $\psi$ is an unsatisfiable instance of \reductionsatnew.
We think that   such a reduction from any \textsf{SAT} instance to a \textsf{KS} instance is quite  novel, and might be further employed to sharpen the   constant $1/(4 \sqrt{2})$.

\subsection{Related Work}

There has been little   work on the algorithmic Kadison-Singer problem.
 Anari et al.~\cite{conf/soda/AnariGSS18} studies approximating  the largest root of a real rooted polynomial and its applications to interlacing families, which are the main tool developed in \cite{MSS} to prove the Kadison-Singer conjecture. They show that a valid partition promised by Weaver's $\mathsf{KS}_2$ conjecture can be found in  $d^{O(m^{1/3} \alpha^{-1/4})}$  time, suggesting that   exhaustive  
search of all possibilities is not required for the algorithmic Kadison-Singer problem  unlike 
the strong exponential time hypothesis for the \textsf{SAT} problem. 

 Becchetti et al.~\cite{conf/soda/BecchettiCNPT20} studies the algorithmic Kadison-Singer problem  for graphs under some restricted condition. Specifically, they show that, if $G=(V,E)$ is an $n$-vertex and $\Delta$-regular graph of
$\Delta=\Omega(n)$ and the second eigenvalue of the adjacency matrix of $G$ is at most a sufficient small constant times $\Delta$, then an unweighted spectral sparsifier of $G$ can be constructed efficiently.  Their algorithm is combinatorial and only works for graphs. 

Weaver~\cite{weaver2} shows that the   \textsf{BSS}-framework for constructing linear-sized spectral sparsifiers~\cite{BSS} can be adapted for the one-sided Kadison-Singer problem, where the  term ``one-sided'' refers to the fact that   the    discrepancy of the algorithm's output   can be only upper bounded.
   
Finally, independent of our work, Spielman and Zhang~\cite{SZ22} studies the same complexity problem as ours. Different from our approach, their analysis starts with the $(3,2\mbox{-}2)$ Set Splitting problem, which is a variant of  the $2$-$2$ Set Splitting problem. They prove that the $(3,2\mbox{-}2)$ Set Splitting problem remains \textsf{NP}-hard even if no pair of sets intersects   in more than one variable. Applying this, they show that the $\mathsf{KS}_2(c)$ problem is $\mathsf{NP}$-hard for  $c=1/4$. While their result is slightly tighter than ours with respect to the value of $c$, the conclusions of the two works are essentially the same. 

%They do this using a reduction from a variant of the $2$-$2$ Set Splitting problem called the $(3,2-2)$ Set Splitting problem where every variable can appear at most three sets. They show the $(3,2-2)$ Set Splitting problem remains NP-hard even if no pair of sets intersect in more than one variable. They then build a reduction to $\mathsf{KS}_2(1/4)$ where the magnitude of the instance vectors is bounded above by $1/2$. While their constant is slightly better, the conclusions of both works are essentially the same: it is NP-hard to find a set $S$ satisfying \eqref{eq:KS_condition} or report no if no such $S$ exists when $c$ is a constant factor smaller than the smallest $c$ such that $S$ is guaranteed to exist.

\subsection{Notation \label{sec:pre}}
 Let $[m]\triangleq \{1,\ldots, m\}$. For any integer $j$, we define vector $\mathbf{1}_j$, in which $\mathbf{1}_j(j)=1$ and all of $\mathbf{1}_j$'s other entries are $0$.  
 For any integer $d\geq 1$, let $\mathbf{0}_{d \times d}\in\mathbb{R}^{d\times d}$ be the  matrix in which every entry is equal to $0$.  
 
 We call a matrix $A$   positive semi-definite~(\textsf{PSD}) if  $x^{\rot}Ax\geq 0$ holds for any $x\in\mathbb{R}^d$. For any two matrices $A$ and $B$, we write $A\preceq B$ if $B-A$ is \textsf{PSD}. The  spectral norm of any matrix $A$ is expressed by $\|A\|$.

% We use the simply fact below in our analysis.
% \begin{lemma}
% For any given $c$, and set $S$ of vectors $v_1,\ldots, v_{\ell}$ such that $\|v_i\|^2\leq\alpha$, one can check in polynomial time whether it holds for any $x\in\mathbb{R}^d$ with $\|x\|^2=1$ that
% \[
%  \abs{\sum_{i \in S} \langle v_i, x\rangle^2 - \frac{1}{2}} \leq c\cdot\sqrt{\alpha}.
% \]
% \end{lemma}
% \begin{proof}
% To check whether set $S$ satisfies the condition, it suffices to look at the eigenvalues of  $\sum_{v\in S} v v^{\rot}$, which can be computed   in polynomial time.
% \end{proof}

% \subsection{Spectral Sparsification}

% \peter{Maybe define an $(\epsilon, \delta)$-sparsifier as per the approximation of the online row sparsification algorithm}

\section{Algorithm Based on Spectral Equivalence Classes \label{sec:algo}}
This section discusses in detail  the construction of spectral equivalence classes, and its application in designing   a randomised algorithm for the $\KSc$ problem.  We analyse the presented algorithm, and   prove Theorem~\ref{thm:algorithm}.  
\subsection{Algorithm}
Our algorithm consists of $m$ iterations:
in iteration $i$, the algorithm constructs the set $\calL_i$ of spectral equivalence classes for the subsets $S \subseteq \{v_1, \ldots, v_i\}$.
For each equivalence class, $\calL_i$ contains a pair $(S, B)$ where $S \subseteq \{v_1, \ldots v_i\}$ is a representative set in the equivalence class and $B \in \R^{d \times d}$ is a spectral sparsifier representing the equivalence class.
% in any iteration $j$ the algorithm uses one set to represent all the subsets spectrally equivalent to each other, and
Moreover, the algorithm 
constructs the representations of spectral equivalence classes in iteration $i$ based on the ones maintained in iteration $i-1$. That is, instead of constructing all the subsets of $\{v_1, \ldots, v_{i}\}$ and grouping them into different spectral equivalence classes,
the algorithm directly constructs the representations of the spectral equivalence classes of $\{v_1, \ldots, v_{i}\}$ based on its constructed equivalence classes of $\{v_1, \ldots, v_{i-1}\}$.
% This can be achieved by applying an online algorithm for constructing spectral sparsifiers for the following reasons:
% \begin{itemize}
%     \item The algorithm sets $\calL_0 = \{(\emptyset, \mathbf{0}_{d \times d}\}$ initially.
%     \item Assume that in iteration $j$ every subset $S \subseteq \{v_1, \ldots, v_j\}$ is spectrally equivalent to some $(S', B') \in \calL_j$ maintained by the algorithm. Then, both of $S$ and $S \cup \{v_{j+1}\}$ are spectrally equivalent to $S'$ and $S' \cup \{v_{j+1}\}$ in iteration $j+1$ as well. 
% \end{itemize}
This can be achieved by applying an online algorithm for constructing spectral sparsifiers, since,
if we assume that in iteration $i-1$ every subset $S \subseteq \{v_1, \ldots, v_{i-1}\}$ is spectrally equivalent to some $(S', B') \in \calL_{i-1}$ maintained by the algorithm, then both of $S$ and $S \cup \{v_{i}\}$ are spectrally equivalent to $S'$ and $S' \cup \{v_{i}\}$ in iteration $i$ as well. 
As such, in iteration $i$ we only need to ensure that the sets $S'$ and $S' \union \{v_{i}\}$ are still represented by some sparsifiers in $\calL_{i}$.

Based on this, we can view all the vectors $v_1,\ldots, v_m$ as arriving \emph{online} and, starting with the trivial spectral equivalence class defined by $\calL_0 = \{(\emptyset, \mathbf{0}_{d \times d})\}$, the algorithm constructs the representations of spectral equivalence classes of $\{v_1, \ldots v_i\}$ in iteration $i$.
Our algorithm applies the online algorithm for constructing spectral sparsifiers~\cite{cohen2016online}~(Lines~\ref{algline:lcomp}-\ref{algline:lupdate2} of Algorithm~\ref{alg:sparsified-ks})
to construct the representations of spectral equivalence classes of $\{v_1, \ldots, v_{i}\}$ based on those of $\{v_1, \ldots, v_{i-1}\}$. Since any subset of $\{v_1, \ldots v_m\}$ is spectrally equivalent to some set of  vectors with size
$n$ where $n$ is nearly linear in $d$~\cite{cohen2016online},
% at most $\bigotilde{d/\epsilon^2}$ \red{
% $O(d\log d/\varepsilon^2)$} for some $\epsilon$
% ~\cite{cohen2016online},
the number of spectral equivalence classes in any set $\calL_i$ will be at most $\binom{m}{n}$.
% where $n$ is the 
% $n = \bigotilde{d / \epsilon^2}$. 
% \blue{
% the number of spectral equivalence classes in any set $\calL_j$ will be at most $\binom{m}{n}$ for $n = \bigotilde{d / \epsilon^2}$. 
% }\red{$n=O(d\log d/\varepsilon^2)$}
See Figure~\ref{fig:sparsified_table} for an illustration of the construction of the sets $\calL_i$ and Algorithm~\ref{alg:sparsified-ks} for the formal description of the algorithm.
% it suffices to look at all the subsets of size $O(d\log d/\epsilon^2)$ in every iteration. 

% With this in mind, our algorithm maintains a two-dimensional table $T$: the first dimension of $T$ is indexed from $1$ to $m$, and the second dimension of $T$ is indexed by all the possible representations of spectral equivalence classes. Specifically, for any $j\in[m]$, Table $T[j,\cdot]$ maintains all the possible subsets, so that one of this sparsifier is close to any subset of $[j]$. \red{See Figure~x for further illustration of our idea, and Algorithm~\ref{alg:sparsified-ks} for formal description.}

\begin{figure}[ht]
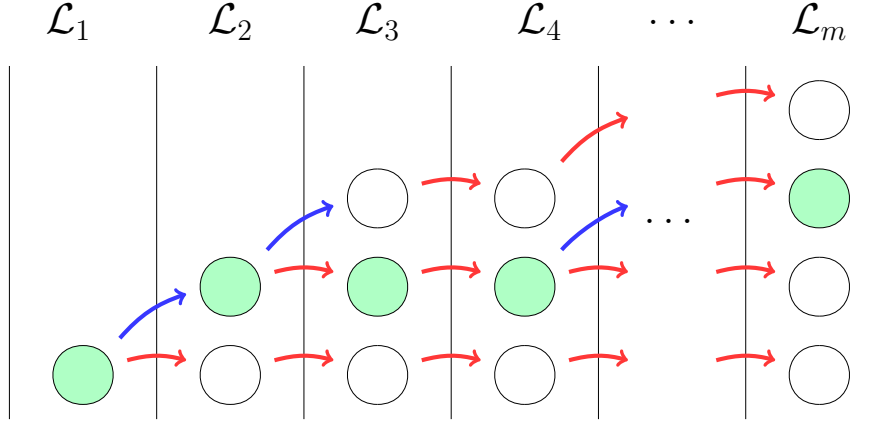

    \centering
    \resizebox{12cm}{!}{%
    \tikzfig{figures/tikzDiagrams/spectrallevelsetssmall}
    }
    \caption{The construction of the sets $\mathcal{L}_i$ in Algorithm~\ref{alg:sparsified-ks}.
    Each $\calL_{i-1}$ contains sparsifiers representing the spectral equivalence classes of the vectors $\{v_1, \ldots, v_{i-1}\}$.
    Then, $\calL_{i}$ contains either one or two ``children'' of each sparsifier in $\calL_{i-1}$, where the second child is added with some small probability which prevents $\cardinality{\calL_m}$ from growing exponentially with $m$.
    For a particular target subset $S \subseteq \{v_1, \ldots v_m\}$, there is some sequence of constructed sparsifiers which corresponds to the process of the online algorithm for constructing spectral sparsifiers~\cite{cohen2016online}, applied to $S$.}
    \label{fig:sparsified_table}
\end{figure}

\begin{algorithm}[H] \label{alg:sparsified-ks}
\caption{\algname$(\mathcal{I} = \{v_i\}_{i = 1}^m, c, \epsilon)$, where $v_i \in \R^d$ and $\norm{v_i}^2 \leq \alpha$ }
$\mu \gets \epsilon / 6$ \\
$\lambda \gets \min\left(c \sqrt{\alpha}, 1/2 - c \sqrt{\alpha}\right)$ \\
$b \gets 8 \log(d) / \mu^2$ \\
 $n \gets \bigo{d \log(d) \log(1/\lambda) / \mu^2}$\\

$\calL_0 \gets \{ (\emptyset, \mathbf{0}_{d \times d}) \}$ \\ \label{algline:basecase}

\For{$i \gets 1~\mathrm{\ to\ }~m$}{
    % $\calL_i \gets \calL_{i-1}$ \\
    $\calL_i \gets \emptyset$ \\
    \For{$(S, B) \in \calL_{i-1}$ and $B$ is constructed with at most $n$ vectors \label{algline:loop}}{
                $S' \gets S \union \{v_i\}$ \\
                % \If{$S'$ is a solution to $\KSc$ \label{algline:if}}{
                \If{$S'$ satisfies \eqref{eq:KS_ep_condition} \label{algline:if}}{
                        \Return $S'$
                }
                $p \gets \min\left(b \left(1 + \mu\right) v_i^\transpose \left(B + \lambda I \right)^{-1} v_i, 1 \right)$ \\ \label{algline:lcomp}
                \eIf{$X \leq p$ where $X \sim \mathrm{Uniform}[0, 1]$ \label{algline:randif}}{
                    $B' \gets B + \frac{1}{p}v_i v_i^\transpose$ \\
                    % $\calL_i \gets \calL_i \union \{(S', B')\}$ \\ \label{algline:lupdate}
                    $\calL_i \gets \calL_i \union \{(S, B), (S', B')\}$ \\ \label{algline:lupdate}
                }{
                    % \red{$S' \gets S \union \{v_i\}$} \\
                    $\calL_i \gets \calL_i \union \{(S', B)\}$ \\ \label{algline:lupdate2}
                }
    }
}
\Return \textsc{Failure}
\end{algorithm}

\begin{remark}
 The if-condition on Line~\ref{algline:if} of Algorithm~\ref{alg:sparsified-ks} can be checked in polynomial time  while introducing an arbitrarily small error, by constructing the matrix $\sum_{v\in S'} vv^{\rot}$ and computing its eigenvalues. 
\end{remark}

\subsection{Analysis}
Since it holds for any   vectors $v_1,\ldots, v_{\ell}$ with $v_i\in\mathbb{R}^d$ that
\[
\sum_{i=1}^{\ell} v_iv_i^{\rot} = 
\begin{pmatrix}
\horzbar\ v_1\ \horzbar\\ 
\horzbar\ v_2\ \horzbar\\ 
\hspace{0.25cm}\vdots\hspace{0.25cm}\\
\horzbar\ v_{\ell}\ \horzbar 
\end{pmatrix}^{\rot}
\begin{pmatrix}
\horzbar\ v_1\ \horzbar\\ 
\horzbar\ v_2\ \horzbar\\ 
\hspace{0.25cm}\vdots\hspace{0.25cm}\\
\horzbar\ v_{\ell}\ \horzbar 
\end{pmatrix},
\]
sparsifying $\sum_{v\in S} v v^{\rot}$ for any $S \subseteq \mathcal{I}$ is equivalent to sparsifying the $|S|\times d$ matrix whose rows are defined by all the $v\in S$. Based on this, our proof uses the result from the online matrix sparsification algorithm~\cite{cohen2016online} as a black box. Specifically, we apply the following lemma in our analysis, which is a special case of Theorem~2.3 from~\cite{cohen2016online}. Notice that the algorithm described in Lemma~\ref{lem:cohenonline} below corresponds to the sampling scheme used in Algorithm~\ref{alg:sparsified-ks}.

\begin{lemma}[\cite{cohen2016online}, Theorem 2.3] \label{lem:cohenonline}
Let $S$ be a set of vectors $v_1, \ldots, v_m \in \R^d$, and let $A = \sum_{v \in S} v v^\transpose$. With $\mu, \delta \in [0, 1]$,  $b \triangleq 8 \log(d) / \mu^2$ and $B_0 = \mathbf{0}_{d \times d}$, construct $B_i$ inductively for  $i \in [m]$ such that 
with probability
    \[
        p_i = \min\left(b (1 + \mu) v_i^\transpose \left(B_{i - 1} + \frac{\delta}{\mu} I\right)^{-1} v_i, 1\right),
    \]
we have
    \[
        B_i = B_{i - 1} + \frac{1}{p_i} v_i v_i^\transpose,
    \]
    and with probability $1 - p_i$, we have $B_i = B_{i - 1}$.
    Then, it holds with  probability $(1 - 1/d)$ that
    \[
        (1 - \mu) A - \delta I \preceq B_m \preceq (1 + \mu) A + \delta I,
    \]
    and the number of vectors added to $B_m$ is $O\left(d\log d \log\left(\mu \|A\|^2/\delta \right)/\mu^2 \right )$.
\end{lemma}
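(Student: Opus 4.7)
The plan is to establish the spectral guarantee via a matrix martingale concentration argument and the sample-count bound via a log-determinant potential. Two features of the process make it non-trivial: the sampling probability $p_i$ depends on the random history through $B_{i-1}$, so standard i.i.d.\ matrix Chernoff does not apply and we must work with a martingale; and the ridge shift $(\delta/\mu) I$ inside $p_i$ is essential to control $(B_{i-1} + (\delta/\mu) I)^{-1}$, which would otherwise be undefined or arbitrarily large in early iterations.

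For the spectral bound, I would define the zero-mean increments $X_i \triangleq \frac{1}{p_i} \mathbf{1}[\text{vector } i \text{ sampled}]\, v_i v_i^\transpose - v_i v_i^\transpose$, which form a matrix martingale difference sequence with $\sum_{i=1}^{m} X_i = B_m - A$, and apply the Tropp matrix Laplace transform / Freedman inequality to the symmetrically preconditioned sum $\tilde A^{-1/2}(B_m - A)\tilde A^{-1/2}$, where $\tilde A_i \triangleq A_i + (\delta/\mu)I$ and $A_i \triangleq \sum_{j \le i} v_j v_j^\transpose$. The per-step predictable variance $\E[X_i^2 \mid \mathcal F_{i-1}]$ is bounded, after preconditioning, in terms of $\frac{1}{p_i} v_i^\transpose \tilde A_{i-1}^{-1} v_i \cdot v_i v_i^\transpose$; the specific choice $b = 8 \log(d)/\mu^2$ is calibrated so that matrix Freedman yields failure probability at most $1/d$ for the two-sided deviation $\mu A + \delta I$. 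Because $p_i$ uses $\tilde B_{i-1}$ and not $\tilde A_{i-1}$, a stopping-time argument is required: let $\tau$ be the first index at which the comparability $\tilde B_\tau \not\preceq 2 \tilde A_\tau$ fails, truncate the martingale at $\tau \wedge m$, and show via the same concentration bound that $\{\tau \le m\}$ has probability at most $1/d$, closing the induction.

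For the sample-count bound, I would take expectation over the sampling randomness:
\[
\E\bigl[\,\abs{\{i : B_i \ne B_{i-1}\}}\,\bigr] \;=\; \sum_{i=1}^m \E[p_i] \;\le\; \frac{8(1+\mu)\log d}{\mu^2} \sum_{i=1}^m \E\!\left[\, v_i^\transpose \bigl(B_{i-1} + \tfrac{\delta}{\mu} I\bigr)^{-1} v_i \,\right].
\]
Each summand is then telescoped via the matrix determinant lemma: when step $i$ samples, $\log \det \tilde B_i - \log \det \tilde B_{i-1} = \log(1 + \tfrac{1}{p_i} v_i^\transpose \tilde B_{i-1}^{-1} v_i)$, while on the complementary event the log-determinant is unchanged; combining both cases and using $\log(1+x) \ge x/(1+x)$ gives $\E[v_i^\transpose \tilde B_{i-1}^{-1} v_i] \lesssim \E[\log \det \tilde B_i - \log \det \tilde B_{i-1}]$. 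Telescoping, and invoking the one-sided bound $\tilde B_m \preceq (1+\mu) A + 2 \delta I$ from the spectral step, yields $\sum_i \E[v_i^\transpose \tilde B_{i-1}^{-1} v_i] \le d \log(1 + \mu \norm{A}/\delta)$, from which the stated $O(d \log d \log(\mu \norm{A}^2/\delta)/\mu^2)$ sparsity follows; passing from expectation to high probability uses a standard Markov tail bound at a constant multiple of the mean, combined with the same stopping-time event.

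The main obstacle is the circular dependency between the two parts: bounding the spectral error at step $i$ needs control of $p_i$, which in turn needs the spectral bound at step $i-1$. The delicate step is designing the stopping-time coupling so that the Freedman bound can be applied unconditionally on the truncated process while still giving the desired bound on the original process, without paying more than $1/d$ in failure probability. Once that coupling is in place, the remainder of the argument reduces to routine applications of Tropp-style matrix concentration and the determinant telescoping identity.
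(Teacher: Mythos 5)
The paper does not prove this lemma at all: it is imported as a black box, being (as the surrounding text states) a special case of Theorem~2.3 of \cite{cohen2016online}, so there is no in-paper proof to compare against. Your sketch is a faithful reconstruction of the strategy used in that reference: a matrix-Freedman/martingale argument for the spectral guarantee, a stopped process to break the circularity between the adaptive probabilities $p_i$ and the approximation guarantee (the $(1+\mu)$ factor in $p_i$ exists precisely so that, while the approximation has held, the sampled estimate overestimates the true online ridge leverage score), and a log-determinant telescoping bound on $\sum_i p_i$ for the sparsity. The one weak step is the final ``Markov tail bound at a constant multiple of the mean'' for the row count: Markov only gives a constant failure probability, whereas the count of sampled vectors is a sum of conditionally independent Bernoullis with predictable parameters and mean $\Omega(d\log d/\mu^2)$, so a martingale Chernoff--Bernstein bound gives the stated bound with probability $1-1/\mathrm{poly}(d)$; this is how the cited source handles it and is an easy repair.
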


Now, we analyse Algorithm~\ref{alg:sparsified-ks}.
We begin by showing that for each pair $(S, B)$ constructed by Algorithm~\ref{alg:sparsified-ks}, $B$ is a spectral sparsifier of $S$ with high probability.
\begin{lemma} \label{lem:algpairs}
Let $\mathcal{L}_i$ be the set constructed by Algorithm~\ref{alg:sparsified-ks} at iteration $i$.
Then, for any $(S^\star, B^\star) \in \mathcal{L}_i$, it holds with probability $(1 - 1/d)$ that
\[
(1 - \mu) A_{S^\star} - \delta I \preceq B^\star \preceq (1 + \mu) A_{S^\star} + \delta I
\]
where $A_{S^\star} = \sum_{v \in S^\star} v v^\transpose$, and the parameters are set in Algorithm~\ref{alg:sparsified-ks} to be $\mu = \epsilon / 6$ and \[\delta = \mu \min(c \sqrt{\alpha}, 1/2 - c \sqrt{\alpha}).\]
\end{lemma}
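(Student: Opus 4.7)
The plan is to reduce the claim to a direct application of Lemma~\ref{lem:cohenonline} by tracing the unique chain of pairs through the algorithm that culminates in $(S^\star, B^\star) \in \calL_i$. Writing $S^\star_j \triangleq S^\star \cap \{v_1, \ldots, v_j\}$ for $j = 0, 1, \ldots, i$ and letting $B^\star_j$ denote the sparsifier carried along this chain, we have $(S^\star_0, B^\star_0) = (\emptyset, \mathbf{0}_{d \times d})$ and $(S^\star_i, B^\star_i) = (S^\star, B^\star)$, and at iteration $j$ the algorithm extends $(S^\star_{j-1}, B^\star_{j-1}) \in \calL_{j-1}$ into $(S^\star_j, B^\star_j) \in \calL_j$.

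The key technical step is to inspect the algorithm's update rule at iteration $j$. Let $X_j$ denote the fresh uniform random variable drawn there and let $p_j \triangleq \min(b(1+\mu) v_j^\transpose (B^\star_{j-1} + \lambda I)^{-1} v_j, 1)$. When $v_j \in S^\star$, we have $S^\star_j = S^\star_{j-1} \cup \{v_j\}$ with $B^\star_j = B^\star_{j-1} + (1/p_j) v_j v_j^\transpose$ if $X_j \leq p_j$ and $B^\star_j = B^\star_{j-1}$ otherwise. When $v_j \notin S^\star$, we have $S^\star_j = S^\star_{j-1}$ and $B^\star_j = B^\star_{j-1}$ regardless of $X_j$, though this pair is re-added to $\calL_j$ only when $X_j \leq p_j$. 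Reading off only the updates at iterations $j$ with $v_j \in S^\star$, the evolution of $B^\star_j$ is exactly that of the online sparsification procedure of Lemma~\ref{lem:cohenonline} applied to the subsequence of vectors in $S^\star$ ordered by index, with parameter choice $\delta/\mu = \lambda$.

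I would then invoke Lemma~\ref{lem:cohenonline} with $\mu = \epsilon/6$ and $\delta = \mu \lambda$ to conclude $(1 - \mu) A_{S^\star} - \delta I \preceq B^\star \preceq (1 + \mu) A_{S^\star} + \delta I$ with probability at least $1 - 1/d$. The subtle point, which will need careful phrasing and is the main obstacle to a clean argument, is the interpretation of the event ``$(S^\star, B^\star) \in \calL_i$'': the cleanest route is to define $B^\star_j$ for every candidate $S^\star$ purely as a function of the random variables $\{X_j : v_j \in S^\star\}$ via the online rule above, treating the complementary variables $\{X_j : v_j \notin S^\star\}$, which only govern whether the chain survives into $\calL_j$, as independent of and irrelevant to the matrix $B^\star_i$. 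This decouples the distribution of $B^\star_i$ from the survival event and aligns the statement cleanly with the black-box guarantee of Lemma~\ref{lem:cohenonline}, so that the conclusion follows without any additional conditional-distribution argument.
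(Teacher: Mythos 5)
Your proposal is correct and follows essentially the same route as the paper's proof: identify the unique chain of pairs leading to $(S^\star, B^\star)$, observe that at each step where $v_j \in S^\star$ the update coincides exactly with one iteration of the online sparsification procedure of Lemma~\ref{lem:cohenonline} (with $\delta/\mu = \lambda$), and invoke that lemma as a black box; the paper phrases this as an induction on $i$ with a case distinction, which is the same argument unrolled. Your closing remark on decoupling the matrix trajectory from the survival event is a point the paper glosses over entirely, and your coupling-style fix (defining $B^\star_j$ as a function of only the coins $\{X_j : v_j \in S^\star\}$) is the right way to make it precise — just note that the survival \emph{event} is not literally independent of the matrix, since the thresholds $p_j$ at steps with $v_j \notin S^\star$ depend on the trajectory, so the honest statement is the unconditional one your coupling yields.
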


\begin{proof}
We will show that for any pair $(S^\star, B^\star)$ constructed by Algorithm~\ref{alg:sparsified-ks}, $B^\star$ is equivalent to the output of the algorithm described in Lemma~\ref{lem:cohenonline} when applied to $S^\star$.
We prove this by induction on $i$.
The base case $i = 0$ follows immediately from the initialisation of $\mathcal{L}_0 = \{(\emptyset, \mathbf{0}_{d \times d})\}$.
For the inductive step we show that the conclusion holds for every pair in $\mathcal{L}_{i}$, assuming it holds for every pair in $\mathcal{L}_{i-1}$.
For each pair $(S^\star, B^\star) \in \mathcal{L}_{i}$,
the proof proceeds by a case distinction.

\textbf{Case 1: $(S^\star, B^\star) \in \mathcal{L}_{i-1}$.}
This case corresponds to the pairs $(S, B)$ added on Line~\ref{algline:lupdate} of Algorithm~\ref{alg:sparsified-ks}.
Accordingly, by the inductive hypothesis, we have that $B^\star$ is equivalent to the output of the algorithm described in Lemma~\ref{lem:cohenonline} applied to $S^\star$.

\textbf{Case 2: $(S^\star, B^\star) \not \in \mathcal{L}_{i-1}$.}
This case covers the pairs involving $S'$ added on Lines~\ref{algline:lupdate} and \ref{algline:lupdate2} of Algorithm~\ref{alg:sparsified-ks}.
Let $(S, B_{i-1})$ be the pair in $\mathcal{L}_{i-1}$ from which $(S^\star, B^\star)$ is constructed.
Notice that $S^\star = S \union \{v_{i}\}$.
Then, by the construction of Algorithm~\ref{alg:sparsified-ks}, with probability $p_{i}$, we have
\[
    B^\star = B_{i-1} + \frac{1}{p_{i}} v_{i}{v_{i}}^\transpose
\]
and with probability $1 - p_{i}$, we have $B^\star = B_{i-1}$, where $p_{i}$ is the probability defined in Lemma~\ref{lem:cohenonline}.
As such, $B^\star$ is the result of applying an iteration of the algorithm defined in Lemma~\ref{lem:cohenonline}, for the new vector $v_{i}$.
This maintains that $B^\star$ is equivalent to the output of the Lemma~\ref{lem:cohenonline} algorithm applied to $S^\star$ and completes the inductive argument.
% \blue{
% We have established that for any pair $(S^\star, B^\star)$ constructed by Algorithm~\ref{alg:sparsified-ks}, $B^\star$ is equivalent to the output of the Lemma~\ref{lem:cohenonline} algorithm when applied to $S^\star$.
% Therefore, for any such pair, we can apply Lemma~\ref{lem:cohenonline} and this completes the proof of the Lemma.
% }
% \ben{I think we should delete this paragraph completely.} \he{I agree with Ben that we can remove this paragraph completely.}
\end{proof}

We now show that \emph{any} set $S \subset \{v_1, \ldots, v_m\}$ is well approximated by one of the sparsifiers constructed in Algorithm~\ref{alg:sparsified-ks}.

% We begin by showing that a given subset of the input vectors will be well approximated by one of the sparsifiers constructed in Algorithm~\ref{alg:sparsified-ks}.
\begin{lemma} \label{lem:ssparsifier}
Let $\mathcal{I} = \{v_i\}_{i = 1}^m$ be the input to Algorithm~\ref{alg:sparsified-ks}.     Let $S \subseteq \mathcal{I}$ be any fixed set, and $A = \sum_{v \in S} v v^\transpose$.
Then, with
probability $(1 - 1/d)$,
there is a matrix $B$ constructed by Algorithm~\ref{alg:sparsified-ks} such that
\[
    (1 - \mu) A - \delta I \preceq B \preceq (1 + \mu) A + \delta I,
\]
where $\mu = \epsilon / 6$ and $\delta = \mu \min(c \sqrt{\alpha}, 1/2 - c \sqrt{\alpha})$.
\end{lemma}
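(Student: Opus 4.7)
The plan is to track, for the fixed target set $S$, a specific ``simulated path'' $(T_0, B_0), (T_1, B_1), \ldots, (T_m, B_m)$ through the pairs constructed by Algorithm~\ref{alg:sparsified-ks}, designed so that the sequence $B_0, B_1, \ldots, B_m$ is exactly the output of running the online sparsification algorithm from Lemma~\ref{lem:cohenonline} on the vectors of $S$ (in the order they appear in $\mathcal{I}$), using the very same random draws $X$ that Algorithm~\ref{alg:sparsified-ks} uses in iteration $i$ on the pair $(T_{i-1}, B_{i-1})$. This coupling is the main conceptual step; once it is set up, the spectral guarantee follows by invoking Lemma~\ref{lem:cohenonline} directly on $S$.

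To carry out the coupling, I would proceed by induction on $i$, setting $(T_0, B_0) = (\emptyset, \mathbf{0}_{d \times d})$ and splitting into cases. If $v_i \in S$, the online algorithm processes $v_i$: when the random draw satisfies $X \le p$, Algorithm~\ref{alg:sparsified-ks} creates the child pair $(T_{i-1} \cup \{v_i\}, B_{i-1} + \tfrac{1}{p} v_i v_i^\transpose)$, which I take as $(T_i, B_i)$; when $X > p$, the pair added is $(T_{i-1} \cup \{v_i\}, B_{i-1})$, which again matches the ``not sampled'' branch of Lemma~\ref{lem:cohenonline}. If $v_i \notin S$, the online algorithm does nothing at step $i$, so $B_i = B_{i-1}$; when $X \le p$, the algorithm conveniently keeps $(T_{i-1}, B_{i-1})$ in $\mathcal{L}_i$ and I set $(T_i, B_i) = (T_{i-1}, B_{i-1})$; when $X > p$, only $(T_{i-1} \cup \{v_i\}, B_{i-1})$ is added, and I set $T_i = T_{i-1} \cup \{v_i\}$ while keeping $B_i = B_{i-1}$. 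The point is that $T_i$ need not equal the target prefix $S \cap \{v_1, \ldots, v_i\}$, because what must match $S$ spectrally is the matrix $B_i$, not the representative $T_i$.

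The one technical obstacle is that the inductive step requires $(T_{i-1}, B_{i-1})$ to actually be processed by the inner loop, which skips pairs whose sparsifier already contains more than $n$ vectors. I will resolve this by noting that the count of vectors added to $B_i$ is monotone non-decreasing in $i$, and by invoking the second conclusion of Lemma~\ref{lem:cohenonline}: with probability $1 - 1/d$, the total number of vectors added to $B_m$ by the online algorithm on $S$ is $O(d \log(d) \log(\mu \|A\|^2 / \delta)/\mu^2)$. Since the input is isotropic we have $A \preceq \sum_{v \in \mathcal{I}} v v^\transpose = I$, hence $\|A\|\le 1$, and with $\delta = \mu\lambda$ this bound collapses to $O(d\log(d)\log(1/\lambda)/\mu^2) = n$. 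Thus on this $(1 - 1/d)$-probability event, every intermediate $B_i$ also has at most $n$ vectors, the coupling is never broken by the loop's filter, and $(T_m, B_m) \in \mathcal{L}_m$.

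Finally, applying the first conclusion of Lemma~\ref{lem:cohenonline} on the same $(1 - 1/d)$-probability event yields
\[
    (1 - \mu) A - \delta I \preceq B_m \preceq (1 + \mu) A + \delta I,
\]
which is exactly the claimed guarantee with $B = B_m$. Taking the intersection of the two events in Lemma~\ref{lem:cohenonline} (they are the same event in the proof of that lemma, so no union bound is needed) preserves the $(1 - 1/d)$ probability.
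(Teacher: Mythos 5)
Your proposal is correct and follows essentially the same route as the paper: an inductive coupling that identifies one chain of matrices $B_0,\ldots,B_m$ inside the sets $\mathcal{L}_i$ with a run of the online sparsification algorithm of Lemma~\ref{lem:cohenonline} on $S$, followed by a direct application of that lemma. The only difference is that you explicitly justify why the size filter on Line~\ref{algline:loop} never breaks the coupling (via $\|A\|\le 1$ and the size bound of Lemma~\ref{lem:cohenonline}), a point the paper glosses over in this lemma and only addresses later in the running-time analysis of Theorem~\ref{thm:algorithm}; this is a welcome tightening rather than a deviation.
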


\begin{proof}
We prove that one of the matrices $B$ constructed by Algorithm~\ref{alg:sparsified-ks} is equivalent to the output of the algorithm defined in Lemma~\ref{lem:cohenonline} applied to the set $S$.
Although the matrices constructed in Algorithm~\ref{alg:sparsified-ks} are always part of a pair $(S', B)$, in this proof we consider only the matrices $B$, and ignore the sets $S'$ which are constructed alongside them.

We now inductively define a sequence $B_0, B_1, \ldots, B_{m}$, such that $B_i$ is a matrix constructed by the algorithm in iteration $i$ and $B_i \in \mathcal{L}_i$ corresponds to the output of the Lemma~\ref{lem:cohenonline} algorithm applied to $S \intersect \{v_1, \ldots, v_i\}$.
Firstly, let $B_0 = \mathbf{0}_{d \times d}$, which is the initial condition for the algorithm in Lemma~\ref{lem:cohenonline} and is constructed by Algorithm~\ref{alg:sparsified-ks} on Line~\ref{algline:basecase}.
Then, for the inductive step, we assume that $B_{i-1}$ is the output of the Lemma~\ref{lem:cohenonline} algorithm applied to $S \intersect \{v_1, \ldots, v_{i-1}\}$ and we define $B_i$ by case distinction.

\textbf{Case 1: $v_i \not\in S$.}
In this case, we set $B_i = B_{i-1}$, and notice that if $B_{i-1}$ is in the set $\mathcal{L}_{i-1}$ constructed by Algorithm~\ref{alg:sparsified-ks}, then $B_i$ must be in the set $\mathcal{L}_i$ since every matrix $B$ in $\mathcal{L}_{i-1}$ is included in $\mathcal{L}_i$ on either Line~\ref{algline:lupdate} or Line~\ref{algline:lupdate2}.
Since $S \intersect \{v_1, \ldots, v_{i-1}\} = S \intersect \{v_1, \ldots, v_i\}$, we have that $B_i$ is the output of the algorithm defined in Lemma~\ref{lem:cohenonline} applied to $S \intersect \{v_1, \ldots, v_i\}$ by the inductive hypothesis.

\textbf{Case 2: $v_i \in S$.}
In this case, we set $B_i$ to be either $B_{i-1}$ or $B_{i-1} + (1/p) v_i v_i^\transpose$, according to the result of the condition on Line~\ref{algline:randif} of Algorithm~\ref{alg:sparsified-ks}.
Notice that, since the definition of $p$ in Algorithm~\ref{alg:sparsified-ks} is the same as the definition in Lemma~\ref{lem:cohenonline}, $B_i$ corresponds to the result of applying an iteration of the algorithm in Lemma~\ref{lem:cohenonline} with $B_{i-i}$ and $v_i$.
Therefore, by the induction hypothesis, $B_i$ is equivalent to the output of the Lemma~\ref{lem:cohenonline} algorithm applied to $S \intersect \{v_1, \ldots, v_i\}$, which completes the inductive construction of $B_1, \ldots, B_m$.

Finally, since our defined $B_m$ corresponds to the output of the algorithm in Lemma~\ref{lem:cohenonline} applied to $S$, we can apply Lemma~\ref{lem:cohenonline} to $S$ and $B_m$ which completes the proof.
\end{proof}

Finally, to prove Theorem~\ref{thm:algorithm}, we need only apply Lemma~\ref{lem:ssparsifier}
for the target set $S \subset \mathcal{I}$, and Lemma~\ref{lem:algpairs} for one of the pairs $(S', B)$ constructed by the algorithm.
% to two specific candidate solutions to the $\KSc$ problem.
In particular, we do not need to take the union bound over all sparsifiers constructed by the algorithm; rather,
% it is enough that accurate sparsifiers are constructed for two target sets.
it is enough that an accurate sparsifier is constructed for one specific target set.

\begin{proof}[Proof of Theorem~\ref{thm:algorithm}]
We first look at the case in which there is some $S \subset \mathcal{I}$,
such that for
\[
    A_S = \sum_{i \in S} v_i v_i^\transpose
\]
it holds that
\[
\left(\frac{1}{2} - c \sqrt{\alpha}\right) \leq x^\transpose A_S x \leq  \left(\frac{1}{2} + c \sqrt{\alpha} \right),
\]
for all unit vectors $x \in \R^d$.
% Notice we can apply Lemma ~\ref{lem:cohenonline} to any pair $(S^*,B^*)$ in $\mathcal{L}_m$.
% That is, for any $(S^*,B^*)\in \mathcal{L}_m$, with probability greater than $1-1/d$, it holds that
% \begin{align}
%     (1 - \mu) A_{S^*} - \delta I \preceq B^* \preceq (1 + \mu) A_{S^*} + \delta I, \label{eqn:thm1proof1}
% \end{align}
% where Algorithm 1 sets $\delta \leftarrow \mu \min\left(c \sqrt{\alpha}, (1/2) - c \sqrt{\alpha}\right)$ and $\mu\leftarrow\epsilon/6$. 
% Now consider the set $S$.
By Lemma ~\ref{lem:ssparsifier}, with probability greater than or equal to $1-1/d$, there exists some pair $(S',B)\in \mathcal{L}_m$ such that
\begin{align}
    (1 - \mu) A_{S} - \delta I \preceq B \preceq (1 + \mu) A_{S} + \delta I, \label{eqn:thm1proof2}
\end{align}
where $\mu = \epsilon / 6$ and $\delta = \mu \min(c \sqrt{\alpha}, 1/2 - c \sqrt{\alpha}) \leq \mu$.
By Lemma~\ref{lem:algpairs}, with probability $1 - 1/d$, we also have that
\[
(1 - \mu) A_{S'} - \delta I \preceq B \preceq (1 + \mu) A_{S'} + \delta I,
\]
where $S'$ is the set constructed alongside $B$.
% Since \eqref{eqn:thm1proof1} applies to all $(S^*,B^*)\in \mathcal{L}_m$ it must apply to $(S',B)$.
Taking the union bound over these two events, with probability at least $1-2/d$, we have for any unit vector $x\in \R^d$, that
\begin{align*}
      x^\transpose A_{S'} x & \le \frac{1+\mu}{1-\mu}\left(\frac{1}{2} + c \sqrt{\alpha}\right) + \frac{2 \delta}{1 - \mu} & & & x^\transpose A_{S'} x & \geq \frac{1-\mu}{1+\mu}\left(\frac{1}{2} - c \sqrt{\alpha}\right) - \frac{2 \delta}{1 - \mu}\\
    & \le  \frac{1+3\mu}{1-\mu}\left(\frac{1}{2} + c \sqrt{\alpha}\right) & \text{and} & &  & \ge  \frac{1-3\mu}{1-\mu}\left(\frac{1}{2} - c \sqrt{\alpha}\right)\\
    & \le (1+\epsilon)\left(\frac{1}{2} + c \sqrt{\alpha}\right) & & & & \ge  (1-\epsilon)\left(\frac{1}{2} - c \sqrt{\alpha}\right),
\end{align*}
where we use the definition of $\delta$ and the fact that $\epsilon = 6 \mu \leq 1$. 
Therefore, the set $S'$ satisfies \eqref{eq:KS_ep_condition} and will be returned by Algorithm~\ref{alg:sparsified-ks}.

On the other side, notice that, by the condition on Line~\ref{algline:if} of Algorithm~\ref{alg:sparsified-ks}, any set returned by the algorithm must satisfy  \eqref{eq:KS_ep_condition}. Therefore,
with probability $1$ the algorithm will correctly report that there is no set $S\subset \calI$ satisfying  \eqref{eq:KS_ep_condition} if it is the case.

Finally, we analyse the running time of the algorithm. By Lemma~\ref{lem:cohenonline}, it holds that $B$ is constructed from  $O(n)$ vectors with probability at least $1-1/d$.
For this reason, on Line~\ref{algline:loop} of Algorithm~\ref{alg:sparsified-ks} we consider only 
% Since it suffices for the algorithm to keep track of the construction of $B$, the algorithm only needs to store
the sparsifiers of  size $\bigo{n}$. The remaining part of the algorithm contributes only polynomial factors to its running time, so the total running time of the algorithm is
\[
\bigo{\binom{m}{n}\cdot \mathrm{poly}(m, d)}. \qedhere
\]
\end{proof}

% \begin{remark}
% \ben{We have to change this due to the reformulation of theorem 1}
% It is worth noting that by changing the assumption on the target set $S$ in Theorem~\ref{thm:algorithm} to be
% \[
%     \frac{1}{2} - c \sqrt{\alpha} + \epsilon \leq x^\transpose A_S x \leq \frac{1}{2} + c \sqrt{\alpha} - \epsilon,
% \]
% and taking $\delta = \mu$ in Lemma~\ref{lem:ssparsifier}, 
% % \he{I'd change Algorithm~1 to Lemma 2, since $\delta$ isn't defined in Algorithm~1.}
% the running time of the algorithm becomes 
% \[
%     \bigo{\binom{m}{n} \poly{m, d}}~\mbox{ for }~n = \bigo{\frac{d \log(d)}{\epsilon^2}},
% \]
% which is independent of $\alpha$, while maintaining the guarantees given by Theorem~\ref{thm:algorithm}.
% That is, by allowing
% additive error instead of multiplicative error
% in the approximation,
% our algorithm runs in time polynomial in $m$ instead of quasi-polynomial in $m$.
%% We can modify Theorem ~\ref{thm:algorithm} so the algorithm checks for additive approximations instead of multiplicative ones by taking $\delta=\mu=\epsilon/6$ in Lemma ~\ref{lem:ssparsifier}. Following the proof of Theorem ~\ref{thm:algorithm}, if there is a set $S\subset\mathcal{I}$ such that
% \[
% \left(\frac{1}{2} - c \sqrt{\alpha}\right) \leq  \sum_{v \in S} \inner{v}{x}^2  \leq \left(\frac{1}{2} + c \sqrt{\alpha} \right),
% \]
% for all unit vectors $x\in \mathbb{R}^d$, then with probability at least $(1-2/d)$, the modified algorithm returns a set $S'\subset \mathcal{I}$ such that
%

%\end{remark}

\section{$\FNP$-Hardness of $\mathsf{KS}_2\left(1/(4\sqrt{2})\right)$ \label{sec:hardness}}
This section studies the computational complexity of the $\KSc$ problem. We prove   that $\mathsf{KS}_2\left(1/(4\sqrt{2})\right)$ is $\FNP$-hard. 
This section is organised as follows. 
In Section~\ref{sec:tfnp} we introduce the $\mathsf{FNP}$ complexity class.   
We formally define the   \reductionsatnew\ problem in Section~\ref{sec:sat_npc}, and prove that this problem is  $\NP$-hard. In Section~\ref{sec:ksnpc}, we build a reduction from the \reductionsatnew\ problem to the $\mathsf{KS}_2(1/(4\sqrt{2}))$ problem. 

\subsection{The $\FNP$ Complexity Class \label{sec:tfnp}}
 
In contrast with the complexity classes $\mathsf{P}$ and $\NP$, the class $\FNP$ is used to study problems with output which is more complex than simply ``yes'' or ``no''.
Formally, given a binary relation $R$ and an input $X$, the corresponding \emph{function problem} is to find $Y$ such that $R(X, Y)$ holds or report ``no'' if no such $Y$ exists.
For example, we can take $X$ to be an instance $\mathcal{I} = \{v_i\}_{i = 1}^m$ of the $\KSc$ problem, and $Y \subseteq \mathcal{I}$ to be a candidate solution.
Then, the relation $R_{\KSc}(\mathcal{I}, Y)$ holds if and only if $Y$ satisfies \eqref{eq:KS_condition}.
Any given binary relation $R$ is in the class $\FNP$ 
iff there is a deterministic polynomial-time algorithm which can determine whether $R(X, Y)$ holds for a given pair $(X, Y)$~\cite{rich2008automata}.
% Notice that every relation in $\FNP$ has a natural corresponding decision problem in $\NP$.
Notice that every function problem has a natural corresponding decision problem.
Specifically, given a binary relation $R$ and a value of $X$, the decision problem asks whether there exists some $Y$ such that $R(X, Y)$ holds.
% A function problem is $\FNP$-hard if its corresponding decision problem is $\NP$-hard.
A function problem $F$ is $\FNP$-hard if there is a polynomial-time reduction from all problems in $\FNP$ to $F$.
It is known that if the decision problem corresponding to $F$ is $\NP$-hard, then $F$ is $\FNP$-hard~\cite{rich2008automata}, and we will use this fact in our proof of Theorem~\ref{thm:hardness}.

\subsection{$\mathsf{NP}$-Completeness of 
\reductionsatnew\ \label{sec:sat_npc}}

In this subsection we study the following \reductionsatnew\ problem, and prove that the problem is $\NP$-complete. We remark that we restrict ourselves to study SAT instances of  a specific form here, since these SAT instances will  be   employed to   prove the $\NP$-hardness of the $\mathsf{KS}_2\left(1/(4\sqrt{2})\right)$ problem.

\begin{problem}[\reductionsatnew] Given a 3SAT instance $\psi$ that consists of a collection $C$ of clauses over the set $U$ of variables such that 
\begin{enumerate}
    \item every clause $c\in C$ has $3$ literals,
    \item  for every $u\in U$, both of $u$ and $\bar{u}$ appear in at most $2$ clauses of $C$,
    \item  for every $u \in U$, at least one of $u$ or $\bar{u}$ appears in exactly $2$ clauses of $C$, and
    \item any two clauses share at most one literal and no variable appears twice in the same clause,
\end{enumerate}
 the \reductionsatnew\ problem asks if there is a satisfying assignment for $\psi$ such that every clause of $\psi$ has at least one true literal and at least one false literal. \label{prob:satnew}
\end{problem}
Our reduction is from the following well-known $\NP$-complete problem.
\begin{problem}[\reductionsatold, \cite{garey_computers_1979}]
Given a 3SAT instance $\psi$ that consists of a collection $C$ of clauses over the set $U$ of variables such that every clause $c\in C$ has $3$ literals, the \reductionsatold\ problem asks if there is a satisfying assignment for $\psi$ such that every clause of $\psi$ has at least one true   literal and at least one false literal.   
\end{problem}

\begin{theorem} \label{thm:nae2lit}
The \reductionsatnew\ problem is $\NP$-complete.
\end{theorem}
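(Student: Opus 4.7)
The plan is to prove membership in $\NP$ (which is immediate, since a candidate truth assignment can be verified in polynomial time by checking that each clause contains both a true and a false literal) and to give a polynomial-time reduction from the $\NP$-complete \reductionsatold\ problem. Given an arbitrary \reductionsatold\ instance $\psi$, I will construct an equivalent \reductionsatnew\ instance $\psi'$ by applying gadget transformations that enforce the additional structural conditions (1)--(4) without changing NAE-satisfiability.

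The key primitive is an \emph{NAE-equivalence gadget}: for any two literals $\ell, \ell'$, a constant-size bundle of 3-literal NAE clauses involving a few fresh auxiliary variables that is NAE-satisfiable if and only if $\ell$ and $\ell'$ take the same truth value. The simplest such bundle is $(\ell, \bar{\ell'}, z) \wedge (\ell, \bar{\ell'}, \bar{z})$ for a fresh $z$ — when $\ell = \ell'$ either value of $z$ works, and when $\ell \ne \ell'$ the two clauses force $z$ and $\bar{z}$ to the same value, a contradiction — but this particular gadget has to be refined for our purposes (see below). Given such a gadget, the reduction proceeds in three stages. Stage~1 (condition~(2)): for every variable $u$ of $\psi$ whose occurrences exceed the allowed bounds, introduce fresh copies $u_1, \ldots, u_k$, replace the $j$-th occurrence of $u$ in the clauses by $u_j$ (preserving polarity), and identify the copies with a chain of equivalence gadgets. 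Stage~2 (condition~(4)): for any clause containing a variable twice, or any pair of clauses sharing more than one literal, repair the offending occurrences by replacing them with further fresh copies linked back via equivalence gadgets. Stage~3 (condition~(3)): for any variable that ends up appearing only once in total, pad it by attaching it via a short equivalence gadget to a fresh auxiliary, raising its occurrence count to two without imposing any new constraint. Each stage introduces only a polynomial number of fresh variables and clauses, so $\psi'$ is polynomial-size and NAE-satisfiable iff $\psi$ is: an assignment of $\psi$ extends to $\psi'$ by giving all copies of a variable the same value and choosing auxiliaries compatibly, and any assignment of $\psi'$ restricts back to $\psi$ by reading off any chosen copy.

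The main obstacle will be the simultaneous bookkeeping required to meet conditions~(2) and~(4). The simple two-clause equivalence gadget above already violates condition~(4), since the two clauses share the literals $\ell$ and $\bar{\ell'}$, and stringing many gadgets naively around a single variable quickly exceeds the per-polarity occurrence bound of condition~(2). To handle this, each equivalence step $u_j \leftrightarrow u_{j+1}$ should be broken into a short sub-chain $u_j \leftrightarrow w_j \leftrightarrow u_{j+1}$ passing through a fresh intermediate $w_j$, with each elementary equivalence implemented by a small bundle of clauses whose literals are arranged so that no two clauses inside the bundle (nor any two clauses across different bundles) share more than one literal, and so that the total number of occurrences of each copy $u_j$ and each auxiliary stays within the allowed bounds. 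Carrying out this design and verifying, clause by clause, that conditions (1)--(4) hold for the final instance is a careful but mechanical case analysis, and produces the desired polynomial-time reduction, completing the proof that \reductionsatnew\ is $\NP$-complete.
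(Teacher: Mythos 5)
Your overall strategy is the same as the paper's: verify membership in $\NP$, then reduce from \reductionsatold\ by splitting each over-used variable into fresh copies and adding clauses that force all copies to agree. The gap is that the agreement-enforcing gadget --- which is the entire technical content of the theorem --- is never actually constructed. Your primitive gadget $(\ell \lor \bar{\ell'} \lor z)\wedge(\ell \lor \bar{\ell'} \lor \bar z)$ fails condition~(4) because its two clauses share the two literals $\ell$ and $\bar{\ell'}$, and it also fails condition~(3) for the auxiliary $z$: each of $z$ and $\bar z$ appears in exactly one clause, whereas condition~(3) demands that at least one polarity appear in exactly two. Your Stage~3 does not repair this, because it only pads variables occurring \emph{once in total}, which is a different (and weaker) condition. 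Inserting intermediates $u_j \leftrightarrow w_j \leftrightarrow u_{j+1}$ dilutes occurrence counts but does nothing about the within-gadget sharing of two literals, so the ``refinement'' you defer to a ``mechanical case analysis'' is precisely the step that the naive local gadget provably cannot supply; some genuinely different construction is required, and you have not exhibited one.

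For comparison, the paper's construction is global rather than local: it first introduces an odd number $n'$ of fresh variables $y_1,\dots,y_{n'}$ with the cyclic clauses $(\bar y_i \lor \bar y_{i+1} \lor y_{i+2})$, whose NAE-satisfying assignments force all $y_i$ equal (the odd cycle length is essential to rule out the alternating assignment), and then adds the cycle $(x_i \lor \bar x_{i+1} \lor y_i)$, which, given that the $y_i$ are all equal, forces a cyclic chain of implications among the copies $x_1,\dots,x_n$ in one direction or the other and hence forces them all equal. In that construction every $\bar y_j$ appears in exactly two clauses, every $x_i$ (or $\bar x_i$, depending on the polarity of the original occurrence) appears in exactly two clauses, no two clauses share more than one literal, and no clause repeats a variable --- so conditions (1)--(4) are verified by direct counting rather than asserted. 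To complete your proof you would need to either adopt such a cyclic construction or design and verify a concrete local gadget meeting all four conditions simultaneously; as written, the reduction is incomplete at its central step.
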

\begin{proof}
Given any \reductionsatnew\ instance $\psi$ and an assignment to $\psi$'s variables, it's straightforward to check in polynomial time if this is a satisfying assignment, and every clause of $\psi$ has at least one true literal and at least one false literal. Hence, the \reductionsatnew\ problem is in $\NP$.

To prove that the \reductionsatnew\ problem is $\NP$-complete, we build a reduction from the \reductionsatold\ problem to the \reductionsatnew\ problem. Specifically, for any \reductionsatold\ instance $(U,C)$, where $U$ is the set of variables and $C$ is a collection of clauses,  we construct an \reductionsatnew\ instance $(U',C')$ such that $(U,C)$ is satisfiable in \reductionsatold\ if and only if  $(U',C')$ is satisfiable in \reductionsatnew. Our construction of $(U',C')$ is as follows. Initially, we set $U'=U$ and $C'=C$.  
 Then, for any variable $x$ which appears only once in $C$, we remove $x$ from $U'$ and the corresponding clause from $C'$ since the clause can always be satisfied by setting $x$ appropriately and so removing the clause does not change the satisfiability of $(U', C')$.
Then, for every remaining variable $x$, we replace the instances of $x$ and $\bar{x}$ with new variables and add additional clauses to ensure that the satisfiability is unchanged.
Specifically, for each $x$ left in $U'$ let
\begin{itemize}
    \item $n_1 = \cardinality{\{c \in C : x \in c\}}$
    \item $n_2 = \cardinality{\{c \in C : \bar{x} \in c\}}$
\end{itemize}
 and set $n = n_1 + n_2$.
Then, we introduce new variables $x_1, \ldots, x_n$ and replace the instances of $x$ in $C'$ with $x_1, \ldots, x_{n_1}$.
Similarly, we replace the instances of $\bar{x}$ with $\bar{x}_{n_1 + 1}, \ldots, \bar{x}_{n}$.

Now, in order to ensure that $(U', C')$ is satisfiable if and only if $(U, C)$ is satisfiable, we   introduce new clauses to $C'$ which have the effect of constraining the variables $x_1, \ldots, x_n$ to have the same truth value in any satisfying assignment.
To achieve this, let $n' \geq n$ be an odd number, and we  introduce additional new variables $y_1, \ldots, y_{n'}$ and clauses
\begin{equation} \label{eq:yvars}
    \left(\bar{y}_{i} \lor \bar{y}_{i + 1} \lor y_{i + 2}\right)\quad\mbox{for any}\  i \in [1, n'], 
\end{equation}
where the  indices are taken modulo $n'$.
We will see that these clauses ensure that the $y_i$ variables must all have the same   value in a satisfying assignment.
We see this by a simple case distinction.
\begin{itemize}
    \item Case 1: $y_1 = y_2$ in a satisfying assignment. Then, by the first clause in \eqref{eq:yvars} it must be that $y_2 = y_3$ since there must be at least one true literal and one false literal in each satisfied clause. Proceeding inductively through the clauses in \eqref{eq:yvars}, we establish that $y_1 = y_2 = \ldots = y_{n'}$.
    \item Case 2: $y_1 \neq y_2$ in a satisfying assignment. We will show that this leads to a contradiction. By the last clause in \eqref{eq:yvars}, $y_{n'} \neq y_1$ since there must be at least one true literal and one false literal. Again, we proceed inductively from the $(n'-1)$th clause in \eqref{eq:yvars} down to establish that $y_1 \neq y_2, y_2 \neq y_3, \ldots,  y_{n'-1} \neq y_{n'}$. As such, we have $y_1 = y_3 = \ldots = y_{2i+1}$ which is a contradition since $n'$ is odd and we have already established that $y_1 \neq y_{n'}$.
\end{itemize}
As such, we can use the variables $y_1, \ldots, y_{n'}$ with the assumption that they have the same value in any satisfying assignment of $(U', C')$.
It remains to construct clauses to guarantee that the variables $x_1, \ldots, x_n$ have the same value in any satisfying assignment.
We add the clauses
\begin{equation} \label{eq:xvars}
    \left(x_{i} \lor \bar{x}_{i + 1} \lor y_i\right)\quad\mbox{for any}\  i \in [1, n], 
\end{equation}
where the   indices are taken modulo $n$.
We will show that $x_1 = x_2 = \ldots = x_n$ in a satisfying assignment by case distinction.
\begin{itemize}
    \item Case 1: $x_1 = y_i$ for all $i$. By the first clause in \eqref{eq:xvars}, it must be that $x_1 = x_2$ since we cannot have $x_1 = \bar{x}_2 = y_i$ in a satisfying assignment. Then, proceeding inductively using each clause in turn we establish that $x_1 = x_2 = \ldots = x_n$.
    \item Case 2: $\bar{x}_1 = y_i$ for all $i$. By the last clause in \eqref{eq:xvars}, it must be that $\bar{x}_n = \bar{x}_1$ since we cannot have $x_n = \bar{x}_1 = y_n$ in a satisfying assignment. Then, proceeding inductively from the $(n-1)$th clause down, we establish that $\bar{x}_1 = \bar{x}_2 = \ldots = \bar{x}_n$.
\end{itemize}
Notice that by this construction, each literal $x_i$, $\bar{x}_i$, $y_i$, and $\bar{y}_i$ now appears at most twice in $C'$, no two clauses share more than one literal and no literal appears twice in the same clause.
 Additionally, every $x_i$ and $\bar{x_i}$ appears exactly once in the clauses added by \eqref{eq:xvars}.
Since the variable $x_i$ also appears exactly once in the clauses corresponding directly to $C$, 
 requirement (3) of the \reductionsatnew\ problem is satisfied.
Moreover, we have that $(U', C')$ has a satisfying assignment if $(U, C)$ has a satisfying assignment; this follows by setting the values of  $x_1, \ldots, x_n$ in $U'$ to the value of their corresponding  $x \in U$.
On the other hand, any satisfying assignment of $(U', C')$ corresponds to a satisfying assignment of $(U, C)$, since we must have that $x_1 = \ldots = x_n$ and can set the value of  $x \in U$ to be the same value to get a satisfying assignment of $(U, C)$.
Finally, notice that our new instance $(U',C')$ of \reductionsatnew\ can be constructed in polynomial time in the size of the instance $(U,C)$ of \reductionsatold. 
This completes the proof. 
\end{proof} 
 
\subsection{$\FNP$--Hardness of $\mathsf{KS}_2\left(1/(4\sqrt{2})\right)$ \label{sec:ksnpc}}
We now show that the $\KSc$ problem is $\FNP$-hard for any $c\leq 1/(4\sqrt{2})$, i.e.,~Theorem~\ref{thm:hardness}. 
At a high level, the proof is by reduction from the \reductionsatnew\ problem.
Given an instance of the \reductionsatnew\ problem, we will construct an instance $\mathcal{I}$ of $\KSc$ such that
\begin{itemize}
\item if the \reductionsatnew\ instance is satisfiable, then there is a set $S \subset \mathcal{I}$ with $\sum_{v \in S} v v^\transpose = (1/2) \cdot I$, and
\item if the \reductionsatnew\ instance is not satisfiable, then for all sets $S \subset \mathcal{I}$ we have
 \[
    \norm{\sum_{v \in S} v v^\transpose - \frac{1}{2} I} \geq \frac{1}{4 \sqrt{2}} \sqrt{\alpha}.
\]
\end{itemize}
This will establish that the $\KScNP$ problem is $\FNP$-complete, and that it is $\NP$-hard to distinguish between instances of $\KSc$ with $\mathcal{W}(\mathcal{I}) = 0$ and those for which $\mathcal{W}(\mathcal{I}) \geq \left(1 / 4 \sqrt{2}\right) \sqrt{\alpha}$.

\begin{proof}[Proof of Theorem~\ref{thm:hardness}]
We prove that $\mathsf{KS}_2(1/(4\sqrt{2}))$ is $\NP$-hard by a reduction from the \reductionsatnew\ problem to the decision version of the $\KScNP$ problem. 
We are given an instance $(U, C)$ of the \reductionsatnew\ problem, and construct an instance of $\KSc$.
Let us refer to 
\begin{itemize}
    \item the clauses in $C$ as $c_1, \ldots c_m$;
    \item the variables in $U$ as $x_1, \ldots, x_n$ and we sometimes write $x_i$ and $\bar{x}_i$ for the un-negated and negated literals.
\end{itemize}
Our   constructed $\KSc$ instance has  $\bigo{n + m}$ dimensions.
Specifically, there is one dimension for each clause in $C$ and one dimension for each variable in $U$ which appears both negated and un-negated in $C$.
We   use
\begin{itemize}
    \item $d^c_j$ to refer to the dimension corresponding to clause $c_j$, and
    \item $d^x_j$ to refer to the dimension corresponding to variable $x_j$.
\end{itemize}
We   add $\bigo{m + n}$ vectors to our $\KSc$ instance.
Conceptually, we   add one vector for each clause and $4$ vectors for each literal.
We   use
\begin{itemize}
    \item $v^c_j$ to refer to the vector corresponding to clause $c_j$, and
    \item $v^x_{j, 1}$ to $v^x_{j, 4}$ or $v^{\bar{x}}_{j, 1}$ to $v^{\bar{x}}_{j, 4}$ to refer to the vectors corresponding to the literal $x_j$ or $\bar{x}_j$.
\end{itemize}
For each clause $c_j$, we set $v^c_j(d^c_j) = 1/2$, and set the other entries of $v^c_j$ to be $0$.
Table~\ref{tab:2literalvectors} completes the definition of the vectors corresponding to literals.
For each literal, we define only the value on the dimensions corresponding to the variable and the clauses containing the literal; all other entries in the vector are $0$. Let $A$ be the set of vectors defined above. Notice that the squared norms of the vectors in $A$ are bounded above by $1 / 4$ and so $\alpha=1 / 4$ in the constructed \textsf{KS}$_2(c)$ instance.

\begin{table}[h]
    \centering
    \begin{tabular}{cccc}
    \toprule
        Vector & Value on $d^c_j$ & Value on $d^c_k$ & Value on $d^x_i$  \\
        \midrule
        $v^x_{i, 1}$ & $1/4$ & $1/4$ & $1 / \sqrt{8}$ \\
        $v^x_{i, 2}$ & $1/4$ & $1/4$ & $- 1 / \sqrt{8}$ \\
        $v^x_{i, 3}$ & $1/4$ & $- 1/4$ & $1 / \sqrt{8}$ \\
        $v^x_{i, 4}$ & $1/4$ & $- 1/4$ & $- 1 / \sqrt{8}$ \\
    \bottomrule
    \end{tabular}
    \caption{The construction of the vectors in the $\KSc$ instance for a literal $x_i$ which appears in clause $c_j$ and possibly also in clause $c_k$. If a literal appears in only one clause, $c_j$, we ignore the middle column corresponding to $c_k$; that is, the vectors corresponding to $x_i$ are non-zero only on dimensions $d^c_j$ and $d^x_i$.}
    \label{tab:2literalvectors}
\end{table}

To complete the reduction, we will show the following:
\begin{enumerate}
    \item It holds that 
    \[
    \sum_{v \in A} v v^\transpose = I.
    \]
    \item If the original \reductionsatnew\ instance has a satisfying assignment, then there is a set $S \subset A$ such that
    \[
        \sum_{v \in S} v v^\transpose = \frac{1}{2} I.
    \]
    \item Any set $S \subset A$ such that
    \[
        \norm{\sum_{v \in S} v v^\transpose - \frac{1}{2} I} < \frac{1}{8 \sqrt{2}} = \frac{1}{4\sqrt 2}\sqrt \alpha
    \]
    corresponds to a satisfying assignment of the original \reductionsatnew\ instance.
\end{enumerate}
 
\paragraph{Vectors in $A$ are isotropic.}
Let us prove that
\[
    \sum_{v \in A} v v^\transpose = I.
\]
Let $B = \sum_{v \in A} v v^\transpose$. Then, for any variable $x_i$, we have that
\begin{align*}
    B(d^{x}_i, d^x_i) & = \sum_{v \in A} v(d^x_i)^2 \\
    & = \sum_{j=1}^4 v^x_{i, j}(d^x_i)^2 + \sum_{j=1}^4 v^{\bar{x}}_{i, j}(d^x_i)^2 \\
    & = 1.
\end{align*}
Additionally, for any clause $c_i$ we have that
\begin{align*}
    B(d^c_i, d^c_i) & = \sum_{v \in A} v(d^c_i)^2 \\
    & = v^c_i(d^c_i)^2 + \sum_{x_j \in c} \sum_{k=1}^4 v^x_{j, k}(d^c_i)^2 \\
    & = \frac{1}{4} + 3 \cdot \frac{1}{4} \\
    & = 1.
\end{align*}
This demonstrates that the diagonal entries of $B$ are all $1$.
We   now see that the off-diagonal entries are all $0$.
First, notice that for any two dimensions relating to variables, $d^x_i$ and $d^x_j$, we have
\begin{align*}
    B(d^x_i, d^x_j) & = \sum_{v \in A} v(d^x_i) v(d^x_j) = 0,
\end{align*}
since there is no vector in $A$ with a non-zero contribution to more than one dimension corresponding to a variable.
Now, let us consider two dimensions corresponding to different clauses $c_i$ and $c_j$.
We have
\begin{align*}
    B(d^c_i, d^c_j) & = \sum_{v \in A} v(d^c_i) v(d^c_j) \\
    & = \sum_{x_k \in c_i \cap c_j} \sum_{\ell = 1}^4 v^x_{k, \ell}(d^c_i) \cdot v^x_{k, \ell}(d^c_j) \\
    & = \sum_{x_k \in c_i \cap c_j} \frac{1}{16} + \frac{1}{16} - \frac{1}{16} - \frac{1}{16} \\
    & = 0,
\end{align*}
where we use the fact that $c_i$ and $c_j$ share at most one literal.
Finally, consider the case when one dimension corresponds to the clause $c_i$ and the other dimension corresponds to the variable $x_j$.
If the variable $x_j$ does not appear in $c_i$, then there are no vectors with a non-zero contribution to the two dimensions and so the entry is $0$.
Otherwise, we have
\begin{align*}
    B(d^c_i, d^x_j) & = \sum_{v \in A} v(d^c_i) v(d^x_j) \\
    & = \sum_{k = 1}^4 v^x_{i, k}(d^c_i) v^x_{i, k}(d^x_j) \\
    & = \frac{1}{4 \sqrt{8}} + \frac{1}{4 \sqrt{8}} - \frac{1}{4 \sqrt{8}} - \frac{1}{4 \sqrt{8}} \\
    & = 0,
\end{align*}
where we use the fact that no variable appears twice in the same clause.
This completes the proof that
\[
    \sum_{v \in A}v v^\transpose = I.
\]

\paragraph{If the \reductionsatnew\ instance is satisfiable, then there is a solution to $\mathsf{KS}_2(1/(4\sqrt{2}))$.}
Given a satisfying assignment to the \reductionsatnew\ problem, let $T \subset U$ be the set of variables which are set to be \textsc{True} and let $F \subset U$ be the set of  variables which are set to be \textsc{False}.
Recall that in a satisfying assignment, each clause in $C$   contains either $1$ or $2$ true literals.
Let $C' \subset C$ be the set of clauses with exactly $1$ true literal in the satisfying assignment.
Then, we define $S$ to be
\[
    S \triangleq \{ v^x_{i, 1}, v^x_{i, 2}, v^x_{i, 3}, v^x_{i, 4} : x_i \in T \} \union 
    \{ v^{\bar{x}}_{i, 1}, v^{\bar{x}}_{i, 2}, v^{\bar{x}}_{i, 3}, v^{\bar{x}}_{i, 4} : x_i \in F \} \union
    \{ v^c_i : c_i \in C' \}.
\]
and we   show that
\[
    \sum_{v \in S} v v^\transpose = \frac{1}{2} I.
\]
Now, we can repeat the calculations of the previous paragraph, this time setting $B = \sum_{v \in S} v v^\transpose$ to show that $B = (1/2) I$.
Specifically, for any variable $x_i$, it holds that
\begin{align*}
    B(d^x_i, d^x_i) = \frac{1}{2}
\end{align*}
since only the vectors corresponding to the negated \emph{or} un-negated variable are included.
For any clause $c_i \in C'$, we have
\begin{align*}
    B(d^c_i, d^c_i) & = v^c_i(d^c_i)^2 + \sum_{k=1}^4 v^x_{j, k}(d^c_i)^2 \\
    & = \frac{1}{4} + \frac{1}{4} = \frac{1}{2}
,\end{align*}
where $x_j$ is the literal which is set to be true in the clause $c_i$.
Similarly, for any clause in $c_i \in C \setminus C'$, we have
\begin{align*}
    B(d^c_i, d^c_i) & = \sum_{k=1}^4 v^x_{j, k}(d^c_i)^2 + \sum_{k=1}^4 v^x_{\ell, k}(d^c_i)^2 \\
    & = \frac{1}{4} + \frac{1}{4} = \frac{1}{2},
\end{align*}
where the literals $x_j$ and $x_{\ell}$ are set to be true in the clause $c_i$.
Then, notice that the calculations for the off-diagonal entries follow in the same way as before.
This completes the proof that a satisfying assignment for the \reductionsatnew\ problem implies a solution to the $\mathsf{KS}_2(1/(4\sqrt{2}))$ problem.  

\paragraph{If there is a solution to $\KSc$, then the \reductionsatnew\ instance is satisfiable.} 
We prove this by a contrapositive argument. That is, we show   that for any set $S'$ which does not correspond to a satisfying assignment of the \reductionsatnew\ problem, there must be some vector $y$ with $\norm{y} = 1$ such that
\begin{equation} \label{eq:epserror}
    \abs{y^\transpose \left(\sum_{v \in S'} v v^\transpose\right) y - \frac{1}{2}} \geq \epsilon
\end{equation}
% \ben{I think this inequality has to be $\ge$. That would match with what we do later\\}
for $\epsilon = \frac{1}{8 \sqrt{2}}$.
Specifically, we will analyse three cases, and show that

\begin{enumerate}
    \item if there is some variable $x_i$ such that $S'$ does not contain exactly $4$ of the vectors \[\{v^x_{i, 1}, v^x_{i, 2}, v^x_{i, 3}, v^x_{i, 4}, v^{\bar{x}}_{i, 1}, v^{\bar{x}}_{i, 2}, v^{\bar{x}}_{i, 3}, v^{\bar{x}}_{i, 4}\},\] then there is a vector $y$ satisfying \eqref{eq:epserror} for $\epsilon = 1/8$;
    \item if Item~(1) does not apply, then if there is some literal $x_i$ such that $S'$ contains $1$, $2$, or $3$ of the vectors $\{v^x_{i, 1}, v^x_{i, 2}, v^x_{i, 3}, v^x_{i, 4}\}$, then there is a vector $y$ satisfying \eqref{eq:epserror} for $\epsilon = \frac{1}{8 \sqrt{2}}$;
    \item if neither Item~(1) nor (2) applies, then if $S'$ does not correspond to a satisfying assignment of the original \reductionsatnew\ instance, there must be a vector $y$ satisfying \eqref{eq:epserror} for $\epsilon = 1/4$.
\end{enumerate}

For the first case, suppose that there is some variable $x_i$ such that $S'$ does not contain exactly $4$ vectors corresponding to the variable $x_i$.
% Without loss of generality we can assume $k\le 3$ is the number of such vectors, otherwise we consider the vectors included in $S'$ corresponding to $\hat{x_i}$.
Let $k \neq 4$ be the number of such vectors, and let $y$ be the vector with all zeros except for $y(d^x_i) = 1$. Notice that
\begin{align*}
    \abs{y^\transpose \left(\sum_{v \in S'} v v^\transpose\right) y - \frac{1}{2}}
    = \abs{ \sum_{v \in S'} v(d^x_i)^2 - \frac{1}{2} }
    = \abs{ \frac{k}{8} - \frac{1}{2} }
    \geq \frac{1}{8}.
\end{align*}

For the second case, suppose that the set $S'$ contains $4$ vectors for each variable, but there is some literal $x_i$ such that $S'$ contains some but not all of the vectors corresponding to $x_i$.
By condition $3$ of the \reductionsatnew\ problem (Problem~\ref{prob:satnew}) we can assume that $x_i$ appears in two clauses $c_j$ and $c_k$. Otherwise, this is the case for $\bar{x}_i$ and $S'$ contains some, but not all, of the vectors corresponding to $\bar{x}_i$ since it contains exactly $4$ vectors corresponding to the variable $x_i$.
Now, we define
\[
    B = \sum_{v \in S'} v v^\transpose
\]
and we   consider the absolute values of certain off-diagonal entries in $B$, which are summarised in Table~\ref{tab:offdiags}.
\begin{table}[t]
    \centering
    \begin{tabular}{cccc}
    \toprule
        Vectors in $S'$ & $ |B(d^x_i, d^c_j)|$ & $\abs{B\left(d^x_i, d^c_k\right)}$ & $ |B(d^c_j, d^c_k)|$ \\
        \midrule
        One vector $v^x_{i, \ell}$ & $1 / \left(8 \sqrt{2}\right)$ & $1 / \left(8 \sqrt{2}\right)$ & $1 / 16$ \\
        Vectors $v^x_{i, 1}$ and $v^x_{i, 2}$ & $0$ & $0$ & $1 / 8$ \\
        Vectors $v^x_{i, 1}$ and $v^x_{i, 3}$ & $1 / \left(4 \sqrt{2}\right)$ & $0$ & $0$ \\
        Vectors $v^x_{i, 1}$ and $v^x_{i, 4}$ & $0$ & $1 / \left( 4 \sqrt{2}\right)$ & $0$ \\
        Vectors $v^x_{i, 2}$ and $v^x_{i, 3}$ & $0$ & $1 / \left(4 \sqrt{2}\right)$ & $0$ \\
        Vectors $v^x_{i, 2}$ and $v^x_{i, 4}$ & $1 / \left(4 \sqrt{2}\right)$ & $0$ & $0$ \\
        Vectors $v^x_{i, 3}$ and $v^x_{i, 4}$ & $0$ & $0$ & $1 / 8$ \\
        Three vectors $v^x_{i, \ell}$ & $1 /\left(8 \sqrt{2}\right)$ & $1 /\left(8 \sqrt{2}\right)$ & $1 / 16$ \\
    \bottomrule
    \end{tabular}
    \caption{The absolute values of certain off-diagonal entries in $B = \sum_{v \in S'} v v^\transpose$, depending on which vectors corresponding to the literal $x_i$ are included in $S'$. We assume that $x_i$ appears in the clauses $c_j$ and $c_k$.}
    \label{tab:offdiags}
\end{table}
Notice that, regardless of which vectors corresponding to $x_i$ are included, there are two indices $\hat{d}_1$ and $\hat{d}_2$ such that
\[
    \abs{B(\hat{d}_1, \hat{d}_2)} \geq \frac{1}{8 \sqrt{2}}.
\]
% Now we define an operator $\oplus \in \{+, -\}$ such that $\oplus$  is set to be ``$+$'' if  $B\left(\hat{d}_1, \hat{d}_1\right) + B\left(\hat{d}_2, \hat{d}_2\right) - 1 $ and $B\left(\hat{d}_1, \hat{d}_2\right)$  have the same sign, and $\oplus$ is set to be ``$-$'' otherwise.
% We further  set $y = \frac{1}{\sqrt{2}}\left(\mathbf{1}_{\hat{d}_1} \oplus \mathbf{1}_{\hat{d}_2} \right)$, where $\mathbf{1}_{j}$ is the indicator vector with $1$ in position $j$ and $0$ elsewhere.
% Then, we have
% \begin{align*}
%     \abs{y^\transpose B y - \frac{1}{2}} & =
%     \abs{\frac{1}{2} \left( B\left(\hat{d}_1, \hat{d}_1\right) + B\left(\hat{d}_2, \hat{d}_2\right) \oplus B\left(\hat{d}_1, \hat{d}_2\right) \oplus B\left(\hat{d}_2, \hat{d}_1\right) \right) - \frac{1}{2} } \\
%     & = \frac{1}{2} \abs{ B\left(\hat{d}_1, \hat{d}_1\right) + B\left(\hat{d}_2, \hat{d}_2\right) - 1 \oplus 2 B\left(\hat{d}_1, \hat{d}_2\right)  } \\
%     & = \frac{1}{2} \left( \abs{B\left(\hat{d}_1, \hat{d}_1\right) + B\left(\hat{d}_2, \hat{d}_2\right) - 1} + 2 \abs{B\left(\hat{d}_1, \hat{d}_2\right)} \right) \\
%     & \geq \abs{B\left(\hat{d}_1, \hat{d}_2\right)} \\
%     & \geq \frac{1}{8 \sqrt{2}},
% \end{align*}
% where the third line follows by the choice of $\oplus$.\\
Using the indices $\hat{d_1}$ and $\hat{d_2}$, define the unit vector 
\begin{align*}
y = \twopartdefow
    {\frac 1 {\sqrt 2}(\mathbf{1}_{\hat{d_1}}+\mathbf{1}_{\hat{d_2}})}
    {\sgn(B(\hat{d_1},\hat{d_1})+B(\hat{d_2},\hat{d_2})-1) = \sgn(B(\hat{d_1},\hat{d_2}))}
    {\frac 1 {\sqrt 2}(\mathbf{1}_{\hat{d_1}}-\mathbf{1}_{\hat{d_2}})}
\end{align*}
where $\sgn(\cdot)$ is the sign function.
% and $\mathbf{1}_{j}$ is the indicator vector with $1$ in position $j$ and $0$ elsewhere.
Then we have
\begin{align*}
    \abs{y^{\rot}By - \frac 1 2} &= \abs{\frac 1 2 \Big(B(\hat{d_1},\hat{d_1}) + B(\hat{d_2},\hat{d_2}) \pm B(\hat{d_1},\hat{d_2}) \pm B(\hat{d_2},\hat{d_1})\Big)-\frac 1 2}\\
    & = \frac 1 2\abs{B(\hat{d_1},\hat{d_1}) + B(\hat{d_2},\hat{d_2}) -1 \pm 2B(\hat{d_1},\hat{d_2})}\\
    &=\frac 1 2 \left(\abs{B(\hat{d_1},\hat{d_1})+B(\hat{d_2},\hat{d_2}) -1} + 2\abs{B(\hat{d_1},\hat{d_2})}\right)\\
    & \geq \abs{B\left(\hat{d}_1, \hat{d}_2\right)} \\
    & \geq \frac{1}{8 \sqrt{2}},
\end{align*}
where the third equality follows by the construction of $y$.\\

Finally, we consider the third case, in which there are $4$ vectors in $S'$ for each variable, and all $4$ vectors correspond to the same literal.
It is clear that such a set $S'$ corresponds unambiguously to an assignment for the original variables in the \reductionsatnew\ instance: specifically, one can set a variable $x_i$ to be \textsc{True} if $S'$ contains $\{v^x_{i, 1}, v^x_{i, 2}, v^x_{i, 3}, v^x_{i, 4}\}$, and  set $x_i$ to be \textsc{False} if $S'$ contains $\{v^{\bar{x}}_{i, 1}, v^{\bar{x}}_{i, 2}, v^{\bar{x}}_{i, 3}, v^{\bar{x}}_{i, 4}\}$.
Then, suppose that there is some clause $c_j \in C$ which is not satisfied by this assignment.
This implies that  either all $12$ of the vectors corresponding to literals in $c_j$ are included in $S'$, or none of the vectors corresponding to literals in $c_j$ are included in $S'$.
In either case, we can  set $y$ to be the indicator vector of the dimension $d^c_j$, and have that
\[
    \abs{y^\transpose B y - \frac{1}{2}} = \abs{\sum_{v \in S'} v(d^c_j)^2 - \frac{1}{2}} \geq \frac{1}{4}
\]
since we can either include $v^c_j$ or not in order to set $\sum_{v \in S'} v(d^c_j)^2$ equal to either $1/4$ or $3/4$.
 
This completes the reduction from the \reductionsatnew\ problem to the decision version of the $\KSc$ problem for $c \leq 1 / (4 \sqrt{2})$ which implies that $\KScNP$ is $\FNP$-hard. 
Furthermore, notice that by the reduction in this proof,
\begin{itemize}
    \item if the \reductionsatnew\ instance is satisfiable, then the constructed instance $\mathcal{I}$ of the $\KSc$ problem satisfies $\mathcal{W}(\mathcal{I}) = 0$;
    \item if the \reductionsatnew\ instance is not satisfiable, then the constructed instance $\mathcal{I}$ of the $\KSc$ problem satisfies $\mathcal{W}(\mathcal{I}) \geq 1 / \left(4 \sqrt{2}\right) \cdot \sqrt{\alpha}$.
\end{itemize}
This shows that distinguishing between instances with $\mathcal{W}(\mathcal{I}) = 0$ and $\mathcal{W}(\mathcal{I}) \geq 1 / \left(4 \sqrt{2} \right) \cdot \sqrt{\alpha}$ is $\NP$-hard, and completes the proof.
\end{proof}

\section{Conclusion}
This paper studies the algorithms and complexity of the Kadison-Singer problem through the $\KSc$ problem, and presents two results.
On one side, we prove that the $\KSc$ problem for any $c\in\mathbb{R}^+$ can be solved in quasi-polynomial time when $d=O(\log m)$, which  suggests  that the   problem is much easier to solve in low dimensions.
The key  to our algorithm design is a novel application of online spectral
sparsification subroutines, with which we are able to efficiently construct representations of all spectral
equivalence classes over time and reduce the enumeration space of the candidate solutions.
We expect that our work could motivate more research on the applications of spectral sparsification and related problems in numerical linear algebra to the algorithmic Kadison-Singer problem.

% \red{
% Given a number
% of works on spectral sparsification and related problems in numerical linear algebra, we expect our work to
% motivate more research on the algorithmic Kadison-Singer problem.}

On the other side, our \textsf{NP}-hardness result shows that the Kadison-Singer type problem for arbitrary dimensions can be as hard as solving the \textsf{SAT} problem,  and the $\KSc$ problem belongs to different complexity classes for different values of $c$. Hence, more refined studies on the classification of its computational complexity would help us better understand the complexity of  the algorithmic Kadison-Singer problem.
In our point of view, both directions left from the paper are very interesting, and we leave these for future work.

\paragraph{Acknowledgement.} We would like to thank an anonymous reviewer for their detailed and valuable comments on   earlier versions of our paper.  These comments helped us significantly improve the presentation of the paper.

\bibliographystyle{alpha}
\bibliography{references.bib}

\newcommand{\etalchar}[1]{$^{#1}$}
\begin{thebibliography}{BOVvdZ16}

\bibitem[AG14]{AnariG14a}
Nima Anari and Shayan~Oveis Gharan.
\newblock The {K}adison-{S}inger problem for strongly {R}ayleigh measures and
  applications to asymmetric {TSP}.
\newblock {\em CoRR}, abs/1412.1143, 2014.

\bibitem[AGSS18]{conf/soda/AnariGSS18}
Nima Anari, Shayan~Oveis Gharan, Amin Saberi, and Nikhil Srivastava.
\newblock Approximating the largest root and applications to interlacing
  families.
\newblock In {\em \SODA{29th}{18}}, pages 1015--1028, 2018.

\bibitem[Ban10]{Bansal10}
Nikhil Bansal.
\newblock Constructive algorithms for discrepancy minimization.
\newblock In {\em \FOCS{51th}{10}}, pages 3--10, 2010.

\bibitem[BCN{\etalchar{+}}20]{conf/soda/BecchettiCNPT20}
Luca Becchetti, Andrea E.~F. Clementi, Emanuele Natale, Francesco Pasquale, and
  Luca Trevisan.
\newblock Finding a bounded-degree expander inside a dense one.
\newblock In {\em \SODA{31st}{20}}, pages 1320--1336, 2020.

\bibitem[BOVvdZ16]{algorithmica/BansalOVZ16}
Nikhil Bansal, Tim Oosterwijk, Tjark Vredeveld, and Ruben van~der Zwaan.
\newblock Approximating vector scheduling: Almost matching upper and lower
  bounds.
\newblock {\em Algorithmica}, 76(4):1077--1096, 2016.

\bibitem[BSS12]{BSS}
Joshua~D. Batson, Daniel~A. Spielman, and Nikhil Srivastava.
\newblock Twice-{R}amanujan sparsifiers.
\newblock {\em {SIAM} Journal on Computing}, 41(6):1704--1721, 2012.

\bibitem[Cas14]{KSconsequence}
Peter~G. Casazza.
\newblock Consequences of the {M}arcus/{S}pielman/{S}tivastava solution to the
  {K}adison-{S}inger problem.
\newblock {\em CoRR}, abs/1407.4768, 2014.

\bibitem[CFTW06]{KS-detailed}
Peter~G Casazza, Matthew Fickus, Janet~C Tremain, and Eric Weber.
\newblock The {K}adison-{S}inger problem in mathematics and engineering: a
  detailed account.
\newblock {\em Contemporary Mathematics}, 414:299, 2006.

\bibitem[CK04]{siamcomp/ChekuriK04}
Chandra Chekuri and Sanjeev Khanna.
\newblock On multidimensional packing problems.
\newblock {\em {SIAM} Journal on Computing}, 33(4):837--851, 2004.

\bibitem[CMP20]{cohen2016online}
Michael~B. Cohen, Cameron Musco, and Jakub Pachocki.
\newblock Online row sampling.
\newblock {\em Theory of Computing}, 16(15):1--25, 2020.

\bibitem[CNN11]{CNN11}
Moses Charikar, Alantha Newman, and Aleksandar Nikolov.
\newblock Tight hardness results for minimizing discrepancy.
\newblock In {\em \SODA{22nd}{11}}, pages 1607--1614, 2011.

\bibitem[GJ79]{garey_computers_1979}
Michael~R. Garey and David~S. Johnson.
\newblock {\em Computers and Intractability: {A} Guide to the Theory of
  NP-Completeness}.
\newblock W. H. Freeman, 1979.

\bibitem[KS59]{KS59}
Richard Kadison and Isadore Singer.
\newblock Extensions of pure states.
\newblock {\em American Journal of Mathematics}, 81:383--400, 1959.

\bibitem[MSS13]{MSS13}
Adam Marcus, Daniel~A. Spielman, and Nikhil Srivastava.
\newblock Interlacing families {I:} bipartite {R}amanujan graphs of all
  degrees.
\newblock In {\em \FOCS{54th}{13}}, pages 529--537, 2013.

\bibitem[MSS15]{MSS}
Adam~W. Marcus, Daniel~A. Spielman, and Nikhil Srivastava.
\newblock Interlacing families {II}: Mixed characteristic polynomials and the
  {K}adison—{S}inger problem.
\newblock {\em Annals of Mathematics}, 182(1):327--350, 2015.

\bibitem[Ric08]{rich2008automata}
Elaine Rich.
\newblock {\em Automata, computability and complexity: theory and
  applications}.
\newblock Pearson Prentice Hall Upper Saddle River, 2008.

\bibitem[SS11]{SpielmanS11}
Daniel~A. Spielman and Nikhil Srivastava.
\newblock Graph sparsification by effective resistances.
\newblock {\em {SIAM} Journal on Computing}, 40(6):1913--1926, 2011.

\bibitem[SZ22]{SZ22}
Daniel~A. Spielman and Peng Zhang.
\newblock Hardness results for {W}eaver's discrepancy problem.
\newblock {\em arxiv:2205.01482}, 2022.

\bibitem[Tro12]{tropp2012user}
Joel~A Tropp.
\newblock User-friendly tail bounds for sums of random matrices.
\newblock {\em Foundations of computational mathematics}, 12(4):389--434, 2012.

\bibitem[Wea04]{dm/Weaver04}
Nicholas Weaver.
\newblock The {K}adison-{S}inger problem in discrepancy theory.
\newblock {\em Discrete Mathematics}, 278(1-3):227--239, 2004.

\bibitem[Wea13]{weaver2}
Nik Weaver.
\newblock The {K}adison-{S}inger problem in discrepancy theory, {II}, 2013.

\end{thebibliography}

\newpage

\appendix

\end{document}